\newtheorem{theorem}{\bf Theorem}
\newtheorem{lemma}{Lemma}
\newtheorem{Remark}{Remark}
\newtheorem{corollary}{\bf Corollary}
\def\BibTeX{{\rm B\kern-.05em{\sc i\kern-.025em b}\kern-.08em
    T\kern-.1667em\lower.7ex\hbox{E}\kern-.125emX}}
\begin{document}

\title{Improved Communication
Efficiency for Distributed Mean Estimation with Side Information
}
 
\author{\thanks{This work is supported   by   NSFC grant NSF61901267.
}\IEEEauthorblockN{Kai Liang\IEEEauthorrefmark{1} and Youlong Wu\IEEEauthorrefmark{1}\\}  
	\IEEEauthorblockA{\IEEEauthorrefmark{1}
	ShanghaiTech University, Shanghai, China}
\IEEEauthorblockA{\IEEEauthorrefmark{2} Shanghai Institute of Microsystem and Information Technology,
	Chinese Academy of Sciences}
\IEEEauthorblockA{\IEEEauthorrefmark{3} University of Chinese Academy of Sciences, Beijing, China}

	\{liangkai,wuyl1\}@shanghaitech.edu.cn
}

\maketitle

\begin{abstract}
In this paper, we consider the distributed mean estimation problem where the server has access to some side information, e.g., its local computed mean estimation  or the received information sent by the distributed clients at the previous iterations.  We propose a practical and efficient estimator  based  on an $r$-bit 
Wynzer-Ziv estimator proposed by Mayekar \emph{et al.}, which requires no probabilistic assumption on the data. Unlike Mayekar's work which only utilizes side information at the server, our scheme jointly exploits the correlation between clients' data and server's side information,  and also between  data of different clients.  We derive an upper bound of the estimation error of the proposed estimator. Based on this upper bound, we provide two algorithms on how to choose input parameters for the estimator. Finally,     parameter regions in which our estimator  is better than the previous one are characterized. 

\end{abstract}

\begin{IEEEkeywords}
	distributed mean estimation, side information, distributed lossy compression
\end{IEEEkeywords}

\section{Introduction}
With the development of modern machine learning technology, more powerful and complex machine learning models can be trained through large-scale distributed training.  However, due to the large scale of the model parameters, in each iteration of the distributed optimization, the exchange of  information between distributed nodes incurs a huge communication load, causing the problem of communication bottleneck.  

We focus on  distributed mean estimation, which is a crucial  primitive for distributed optimization frameworks. Federated learning \cite{18} is one of such   frameworks, in which clients participating in joint training only need to exchange their own gradient information without sharing private data.  To alleviate  the communication bottleneck,   gradient compression \cite{1,2,3,4,5,6,7} and efficient mean estimator \cite{8,9,10,11,12,13,14} have  been investigated to  reduce the communication load.  Recently, \cite{20} studied distributed mean estimation with side information at the server, and proposed  Wyner-Ziv estimators that require no probabilistic assumption on the clients data.  

 In parallel, distributed source compression has been widely studied in classical information theory. For example, \cite{15} first studied the setting of lossy source compression with side information in the decoder. Channel coding can obtain practical coding for distributed source coding \cite{16,17}, but the main bottleneck lies in the expensive computational complexity of coding and decoding. 

In this paper, we study   practical schemes for  distributed mean estimation with side information at the server. The motivation  is based on the fact that the server could store publicly accessible data, and also at each iteration, the server has already received data sent by  clients at  previous iterations, which can be viewed  as side information. Rather than using   random coding with joint typicality tools such as  in \cite{15,Gapstar}, which is impractical to implement, we follow the work in \cite{20}  which   proposed a Wyner-Ziv estimator based on coset coding. Unfortunately, they only utilized the side information at the server, but failed to exploit   correlation between clients' vectors.  In fact, in many scenarios such as stochastic gradient descent, data between different clients may have a high correlation since they wish to learn a global  model.  Inspired by Wyner-Ziv and Slepian-Wolf coding, we  propose a  practical scheme based on the coset coding and jointly exploit  the side information at the server  and correlation between clients' data. Note that in our scheme we must address  an  ambiguity problem not existing in \cite{20} or in the classic Wyner-Ziv coding. In more detail, 
 since each client compresses its data and sends it to the server, the server only observes a lossy version of  clients' vectors.   This ambiguity cause mismatch information at the clients and server. Using the lossy version of clients' data at the server may even deteriorate the estimation. 


 

We summarize   our contributions  as follows: 1) We  propose a new estimator that improves the estimator  in \cite{20} by jointly exploiting the side information at the server and the correlation between clients' data; 2) We derive an upper bound of estimation error of the proposed estimator; 3)  We provide two greedy algorithms on how to  choose input parameters for the estimator,  and characterize the parameter regions in which our estimator  has a tighter upper bound than of the previous estimator. 


\section{Problem Setting}
Consider the problem of distributed mean estimation with side information, as depicted in Fig. \ref{imgModel}. The model consisting of $n$ clients and one server, where each Client $i\in[n]\triangleq \{1,\ldots,n\}$ observes data  $x_i\in\mathcal{X}\subset\mathbb{R}^d$ and the server has access to side information $\boldsymbol{y}=(y_1,\ldots,y_n)$, $y_i\in \mathcal{Y} \subset\mathbb{R}^d$, for some  alphabets $\mathcal{X},\mathcal{Y}$ and positive integer $d\in\mathbb{N}$. The server wishes to compute the empirical mean, i.e,
\begin{IEEEeqnarray}{rCl}
{\bar{x}}\triangleq \frac{1}{n}\sum_{i=1}^{n}{{x}}_i.
\end{IEEEeqnarray}
Note that  the  side information $\boldsymbol{y}$ could stem from some publicly accessible data or the server's guess of $\boldsymbol{x}\triangleq(x_1,\ldots,x_n)$ in the previous iterations. 

\begin{figure}
			\centering
			\includegraphics[width=0.33\textwidth]{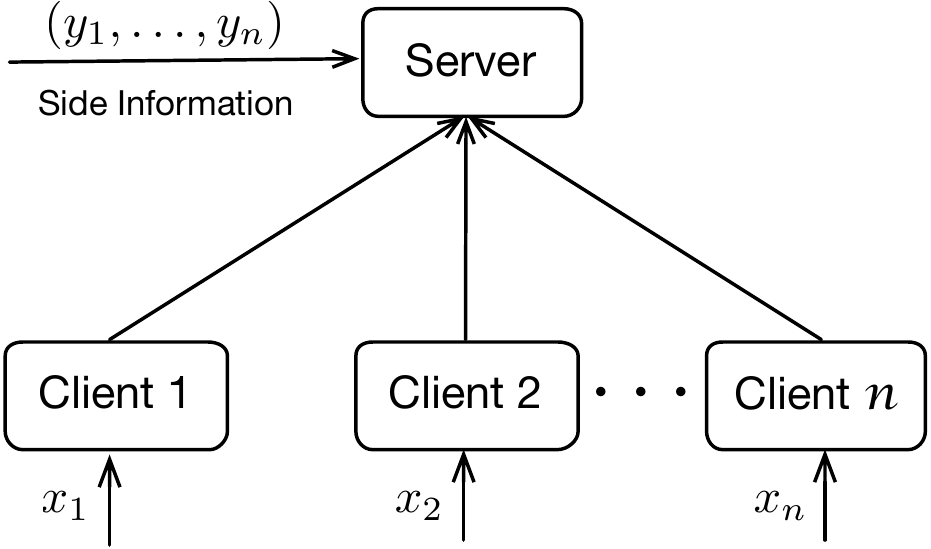} 
			\caption{Distributed mean estimation with side information}
			\label{imgModel} 
		\end{figure}


We focus on non-interactive protocols and  study the $r$-bit simultaneous message passing (SMP) protocol similar to that in \cite{20}.  The $r$-bit SMP protocol $\pi=(\pi_1,\ldots,\pi_n)$  consists of $n$ encoders $\{\Psi_i\}_{i=1}^n$ and one decoder $\Phi$, of mapping forms:
\begin{IEEEeqnarray}{rCl}
&&\Psi_i: \mathcal{X}\to\{0,1\}^r,\label{quantizer}\\
&&\Phi: \underbrace{\{0,1\}^r\times\ldots\times\{0,1\}^r}_{n~\text{times}}\times \mathcal {Y}^{n}\to {\mathbb{R}^d}.\label{decoder}
\end{IEEEeqnarray}
Each Client $i\in[n]$ uses the encoder $\Psi_i$  to encode     $x_i$ into an $r$-bit message, i.e.,
$m_i = \Psi_i(x_i,U),$
where $U$ denotes a shared randomness   known by all  the server and clients. 
 The Client  $i$ then sends the message $m_i$ to the server. Assume the message $m_i$ can be perfectly received by the server.  After receiving all messages $\textbf{m}^{(n)}=(m_1,\ldots,m_n)$, the  server uses decoder $\Phi$ to produce  $\hat{\bar{x}}$ as
 \begin{IEEEeqnarray}{rCl}
\hat{\bar{x}}=\Phi (\textbf{m}^{(n)},\boldsymbol{y},U).
\end{IEEEeqnarray}



The performance of the $r$-bit SMP protocol using protocol $\pi$  with inputs $\boldsymbol{x}$ and $\boldsymbol{y}$, is evaluated by the mean squared error (MSE), i.e.,
\begin{equation}\label{error}
	\mathcal{E}_{\pi}(\boldsymbol{x},\boldsymbol{y})\triangleq \mathbb{E}[||\hat{\bar{x}}-\bar{x}||_2^2]. 
\end{equation}

Instead of using any probabilistic assumption on input data and side information,  we use the Euclidean distance between vectors  to measure correlation  among the data and side information. More specifically, let  $ {x}_i$ and $ {y}_i$ be at most $\Delta_i$ and the distance between $ {x}_i$ and $ {x}_j$ be at most $\Delta_{ij}$, i.e.,
\begin{subequations}\label{eqDist}
\begin{IEEEeqnarray}{rCl}\label{side}
|| {x}_i- {y}_i||_2\leq \Delta_i, \forall i\in [n],\label{sidey}\\
|| {x}_i- {x}_j||_2\leq \Delta_{ij}, \forall i,j\in [n].\label{sidex}
\end{IEEEeqnarray}
\end{subequations}
Since  the  distance is symmetric with $\Delta_{ij}=\Delta_{ji}$, it's sufficient to   only consider $\Delta_{ij}$ with $i<j$. Let 
\begin{IEEEeqnarray}{rCl}
\boldsymbol{\Delta_s}\triangleq (\Delta_1,\cdots,\Delta_n), \boldsymbol{\Delta_c}\triangleq (\Delta_{12},\cdots,\Delta_{(n-1)n}).
\end{IEEEeqnarray}
 We are interested in the performance of   protocols when $\boldsymbol{\Delta_s}$ and $\boldsymbol{\Delta_c}$ are both known to clients and server. Define the     optimal $r$-bits protocol with the minimum MSE as $\pi^*$, and the corresponding MSE as $\mathcal{E}_{\pi^*}(\boldsymbol{x},\boldsymbol{y})$.  Our goal is to find practical and efficient  $r$-bits SMP protocols, and derive tighter  upper bounds on $\mathcal{E}_{\pi^*}$ than the previous results. 

 \section{Previous Work}\label{PW}
In \cite{20},  the authors proposed  a   SMP protocol based on an $r$-bit Wyner-Ziv quantizer  $Q_\textnormal{WZ}$. The quantizer $Q_\textnormal{WZ}$ contains an encoder mapping $Q_\textnormal{WZ}^\textnormal{e}$ the same as  \eqref{quantizer} and a simplified decoder mapping 
$Q_\textnormal{WZ}^\textnormal{d}:\{0,1\}^r \times \mathcal{Y}\to {\mathbb{R}^d}.$  Each Client $i\in[n]$ first uses the encoder $Q_\textnormal{WZ}^\textnormal{e}$ to encode $x_i$ and then sends the encoded message $m_i$ to the server. 
  The server uses the decoder $Q_\textnormal{WZ}^\textnormal{d}$ to produce estimate $\hat{x}_i$ as
\begin{IEEEeqnarray}{rCl}\label{decoder}
\hat{x}_i = Q_\textnormal{WZ}^\textnormal{d}(m_i,y_i),
\end{IEEEeqnarray}
and then computes the sampling means as
\begin{equation}\label{esti}
\hat{\bar{x}}= \frac{1}{n}\sum_{i=1}^{n}\hat{{x}}_i.
\end{equation}

The quantizer   $Q_\textnormal{WZ}$ achieves the following upper bound on MSE. 
\begin{theorem}[Upper bound given in \cite{20}]
For a fixed $\boldsymbol{\Delta_s}$ and $r\leq d$,  the optimal $r$-bits protocol $\pi^*$ satisfy 
\begin{IEEEeqnarray}{rCl}\label{baseline}
\mathcal{E}_{\pi^*}(\boldsymbol{x},\boldsymbol{y})
&\leq &(79\lceil \log(2+\sqrt{12\ln n})\rceil+26)(\sum_{i=1}^n \frac{\Delta_i^2d}{n^2r}),~~
\end{IEEEeqnarray}
for all $\boldsymbol{x}$ and $\boldsymbol{y}$ satisfying (\ref{eqDist}).
\end{theorem}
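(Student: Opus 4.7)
The plan is to exhibit an explicit $r$-bit SMP protocol whose MSE matches the right-hand side of \eqref{baseline}; since $\pi^*$ is by definition optimal, any achievable MSE is an upper bound on $\mathcal{E}_{\pi^*}$. The natural candidate is the protocol from \cite{20} described in Section~\ref{PW}, where each client independently runs the Wyner--Ziv quantizer $Q_\textnormal{WZ}$ on $(x_i,y_i)$ and the server averages the decoded estimates as in \eqref{esti}.

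The first step is to establish a single-client guarantee of the form $\mathbb{E}[\hat{x}_i]=x_i$ and
\begin{IEEEeqnarray*}{rCl}
\mathbb{E}\bigl[\|\hat{x}_i-x_i\|_2^2\bigr] \leq C\,\frac{\Delta_i^2\,d}{r},
\end{IEEEeqnarray*}
with $C=79\lceil\log(2+\sqrt{12\ln n})\rceil+26$. Unbiasedness should follow from a standard dithering argument: the shared randomness $U$ randomises the scalar quantisation grid so that, conditional on $(x_i,y_i)$, the per-coordinate error has zero mean. For the variance bound, I would unpack the construction of $Q_\textnormal{WZ}$ as a composition of three ingredients. (i) The client forms the residual $x_i-y_i$, of norm at most $\Delta_i$ by \eqref{sidey}, and applies a random rotation so that every coordinate becomes sub-Gaussian with variance proxy $O(\Delta_i^2/d)$. (ii) Each coordinate is passed through a randomised scalar uniform quantiser with step size $\Theta(\Delta_i\sqrt{\ln n}/\sqrt{r})$ and truncation range tuned so that, after a union bound over $nd$ coordinates, no coordinate escapes the range with probability $\Omega(1-1/n)$. (iii) The full $d$-dimensional quantised vector is then mapped to $r$ bits by reporting its index modulo a coset of size $2^r$; the side information $y_i$ at the server resolves the coset ambiguity because $\|x_i-y_i\|_2\le\Delta_i$ localises the true quantisation point to a unique coset representative with high probability.

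The second step is to aggregate over clients. Using fresh shared randomness per client, the errors $\{\hat{x}_i-x_i\}_{i\in[n]}$ are mutually independent and zero-mean, so
\begin{IEEEeqnarray*}{rCl}
\mathcal{E}_{\pi^*}(\boldsymbol{x},\boldsymbol{y}) \;\leq\; \mathbb{E}\!\left[\Bigl\|\tfrac{1}{n}\sum_{i=1}^n(\hat{x}_i-x_i)\Bigr\|_2^2\right] \;=\; \frac{1}{n^2}\sum_{i=1}^n \mathbb{E}\bigl[\|\hat{x}_i-x_i\|_2^2\bigr],
\end{IEEEeqnarray*}
and substituting the single-client bound yields \eqref{baseline} directly.

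The hard part is the single-client analysis in step one. Concretely, one has to simultaneously (a) choose the truncation radius large enough that the tails of the rotated residual contribute negligible bias and variance, using a Gaussian/sub-Gaussian concentration inequality whose union bound over $n$ clients is what produces the $\sqrt{12\ln n}$ factor, (b) choose the scalar step size small enough to get in-cell distortion of order $\Delta_i^2 d/r$, and (c) choose the coset modulus large enough that the decoder's nearest-coset-representative-to-$y_i$ rule picks the correct lattice point except on a low-probability event contributing $o(\Delta_i^2 d/r)$ to the MSE. Balancing these three requirements against the $r$-bit budget is precisely what gives the explicit constants $79$ and $26$ and the $\lceil\log(2+\sqrt{12\ln n})\rceil$ factor in the statement.
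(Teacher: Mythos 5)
This theorem is not proved in the paper: it is quoted from reference \cite{20} as a baseline. The closest thing the paper has to a proof of it is the machinery in Section~V and the appendix: Lemma~\ref{upbound} (the server-side aggregation inequality) together with the chain-length-one case of Lemma~\ref{main1}, which is exactly the analysis of $Q_{\textnormal{WZ}}$. Your plan---exhibit a concrete achievable protocol, bound its per-client error, and aggregate---matches that outline, and steps (a)--(c) of your single-client analysis correctly identify the three tensions (truncation versus tail, step size versus rate, coset modulus versus ambiguity) in $Q_{\textnormal{WZ}}$. But one load-bearing claim in your writeup is false, and it is precisely the point where the paper's Lemma~\ref{upbound} is needed.

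You assert that the shared dither makes each $\hat x_i$ an unbiased estimate of $x_i$, so the per-client errors are independent \emph{and} zero-mean, and the cross terms in $\mathbb{E}\|\tfrac1n\sum_i(\hat x_i-x_i)\|_2^2$ vanish. This is not true for $Q_{\textnormal{WZ}}$. The modulo quantizer $Q_{\textnormal{M}}$ is unbiased only on the event $|x-h|\le\Delta'$ (Lemma~\ref{cqpe}); after the random rotation, the per-coordinate residual $R(x_i-y_i)(j)$ exceeds the truncation radius $\Delta'$ with probability on the order of $n^{-3/2}$ (equation \eqref{subga} with the chosen $\Delta'$), and on that event the decoder resolves to the wrong coset representative, producing a biased output. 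That is exactly why the paper works with both $\alpha_i(Q)=\sup\mathbb{E}\|\hat x_{Q,i}-x_i\|_2^2$ and the separate bias quantity $\beta_i(Q)=\sup\|\mathbb{E}[\hat x_{Q,i}-x_i]\|_2^2$, and why Lemma~\ref{upbound} reads
\begin{equation*}
	\mathbb{E}\bigl[\|\hat{\bar x}-\bar x\|_2^2\bigr]\le\sum_{i=1}^n\frac{\alpha_i(Q)}{n^2}+\sum_{i=1}^n\frac{\beta_i(Q)}{n},
\end{equation*}
with the bias appearing at order $1/n$, not $1/n^2$: since the per-client biases are deterministic given the data, the cross terms add up coherently across clients (after Cauchy--Schwarz, $\bigl(\sum_i\|b_i\|\bigr)^2\le n\sum_i\beta_i$). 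Your formula $\frac{1}{n^2}\sum_i\mathbb{E}\|\hat x_i-x_i\|_2^2$ drops this term entirely. The final inequality in the theorem still holds because $\beta_i(Q_{\textnormal{WZ}})\le 154\,\Delta_i^2/n$, so $\sum_i\beta_i/n$ is of order $\sum_i\Delta_i^2/n^2$; but that contribution is \emph{not} $o(\Delta_i^2 d/r)$ when $r$ is close to $d$---it is comparable to the variance term and is absorbed into the constant $79\lceil\log(2+\sqrt{12\ln n})\rceil+26$, not into a lower-order error. A proof that silently assumes $\beta_i=0$ does not establish that constant.

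One smaller point: the $\sqrt{12\ln n}$ does not come from a union bound over the $nd$ coordinates. The truncation $\Delta'=\sqrt{6(\Delta_i^2/d)\ln\sqrt n}$ is chosen so that each rotated coordinate overflows with probability $O(n^{-3/2})$, which makes the resulting bias $\beta_i=O(\Delta_i^2/n)$; the $\sqrt{12\ln n}$ then enters through the choice $\log k=\lceil\log(2+\sqrt{12\ln n})\rceil$, which is what makes the in-cell term $24\ln\sqrt n/(k-2)^2$ a constant. It is a truncation-versus-resolution tradeoff per coordinate, not a union bound over all clients and coordinates.
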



Now we introduce the quantizer $Q_\textnormal{WZ}$, as it is closely related to work.    Since all clients use the same quantizer,  only  the common quantizer is described.  
We first describe a modulo quantizer $Q_\textnormal{M}$ for one-dimension input $x\in\mathbb{R}$ with side information $h\in\mathbb{R}$, and then present a rotated modulo quantizer $Q_{\textnormal{M},R}(x,h)$ for $d$-dimension data.  Finally, the $r$-bit Wyner-Ziv quantizer based on $Q_\textnormal{M}$ and $Q_{\textnormal{M},R}$ is given. 
\subsubsection{Modulo Quantizer ($Q_\textnormal{M}$)}\label{cq}  Given the input $x\in\mathbb{R}$ with  side information $h\in\mathbb{R}$, the modulo quantizer $Q_\textnormal{M}$ contains parameters including a distance parameter $\Delta'$ where $|x-h|\leq \Delta'$, a resolution parameter $k\in\mathbb{N}^+$ and a lattice parameter $\epsilon$. 

Denote the encoder and decoder of $Q_{\textnormal{M}}$ as $Q^{\textnormal{e}}_{\textnormal{M}}(x)$ and $Q^{\textnormal{d}}_{\textnormal{M}}(Q^{\textnormal{e}}_{\textnormal{M}}(x),h)$, respectively. 
 The encoder $Q^{\textnormal{e}}_{\textnormal{M}}(x)$ first computes $\lceil x/\epsilon\rceil $ and $\lfloor x/\epsilon\rfloor $, and then outputs the message $Q^{\textnormal{e}}_{\textnormal{M}}(x)=m$, where
\begin{IEEEeqnarray}{rCl}
m =  \left\{ \begin{array}{llr}
(\lceil x/\epsilon\rceil\mod k), &~\text{w.p.}~x/\epsilon-\lfloor x/\epsilon\rfloor  \\
(\lfloor x/\epsilon\rfloor \mod k), &  ~\text{w.p.}~\lceil x/\epsilon\rceil- x/\epsilon
\end{array}.
\right.
\end{IEEEeqnarray}
The message $m$ has length of $\log k$ bits, and is  sent  to the decoder.  The decoder $Q^{\textnormal{d}}_{\textnormal{M}}$ produces the estimate $\hat{x}$ by finding a point closest to $h$ in the set $\mathbb{Z}_{m,\epsilon}=\{(zk+m)\cdot\epsilon:z\in\mathbb{Z}\}$. 

\subsubsection{Rotated Modulo Quantizer ($Q_{\textnormal{M},R}$)}\label{hcq}
 
  Given the input  $x\in\mathbb{R}^d$ with side information $h\in\mathbb{R}^d$ where $\|x-h\|_2\leq \Delta$, the input parameters for $Q_{\textnormal{M},R}$ include   a distance parameter  $\Delta'$, a resolution parameter $k\in\mathbb{N}^+$, a lattice parameter $\epsilon$, and a rotation matrix $R$ given by  
  \begin{equation}\label{ger}
	R=  WD/{\sqrt{d}},
\end{equation}
where $W$ is the $d\times d$ Walsh-Hadamard Matrix \cite{21}  and $D$ is a diagonal matrix with each diagonal entry generated uniformly from $\{+1,-1\}$ by using a shared randomness. After the rotation,   every  coordinate $i\in[d]$ of   $R(x-h)$, denoted by $R(x-h)(i)$, has zero mean sub-Gaussian  with  a variance factor  of $\Delta^2/d$, i.e., 
\begin{equation}\label{subga}
P(|R(x-h)(i)|\geq\Delta')\leq 2e^{-\frac{\Delta'^2 d}{2\Delta^2}}.
\end{equation}


 The quantizer $Q_{\textnormal{M},R}$ first preprocesses $x$ and $h$ by multiplying  both $x$ and $h$ with a matrix $R$,   and then  applies $Q_\textnormal{M}$ for each coordinate. Denote the encoder and decoder of $Q_{\textnormal{M},R}$ as $Q^{\textnormal{e}}_{\textnormal{M},R}(x)$ and $Q^{\textnormal{d}}_{\textnormal{M},R}(Q^{\textnormal{e}}_{\textnormal{M},R}(x),h)$, respectively.
 
 
 \subsubsection{The $r$-bit Wyner-Ziv Quantizer ($Q_\textnormal{WZ}$)}\label{rhcq}

Note that in the    quantizer $Q_{\textnormal{M},R}$ the input $x$ is encoded into  $d$ binary strings of $\log k$ bits each, leading to a total number of $d\log k$ bits. In the $r$-bit Wyner-Ziv quantizer, the encoder first encodes $x$ using the same encoder as $Q^{\textnormal{e}}_{\textnormal{M},R}(x)$, and then uses a shared randomness to select a  subset $S\subset\{1,\ldots,d\}$ of  these strings with $|S|=\lfloor r/\log k\rfloor$, and finally sends them to decoder. The decoder  uses the same decoder as $Q^{\textnormal{d}}_{\textnormal{M},R}$ to decode the entries in $S$.   Denote the encoder and decoder of $Q_{\textnormal{WZ}}$ as $Q^{\textnormal{e}}_{\textnormal{WZ}}(x)$ and $Q^{\textnormal{d}}_{\textnormal{WZ}}(Q^{\textnormal{e}}_{\textnormal{WZ}}(x),h)$, respectively. 

 \section{New Protocol and New Upper bound}
 \subsection{New Protocols}\label{chain}
 

 Note that in \eqref{decoder} only $y_i$ is used as side information to assist the estimation for $x_i$ at the server. In fact,  apart from $y_i$, the side information $\{y_j\}_{j\neq i}$ and other clients' data $\{x_j\}_{j\neq i}$ could  also be correlated to  $x_i$, and thus can be jointly utilized to reduce the transmission load. The main challenge is that $\{x_j\}_{j\neq i}$ cannot be perfectly known by the server, and thus using the estimate $\{\hat{x}_{j}\}_{j\neq i}$ as side information for $x_i$ may even deteriorate the estimation.  


 Our   protocol is based on a set of  $r$-bit new quantizers, denoted by $\{Q^{\mathcal{L}_{\Pi_i}}_\textnormal{Pro}\}_{i=1}^n$, where $\Pi_i$ denotes the $i$-th element of a permutation $\Pi$ of  $[n]$, and $\mathcal{L}_{\Pi_i}$ is a \emph{chain} parameter need to be designed and has a form of 
 $\mathcal{L}_{\Pi_i}:y_{{\Pi_i}_1}\rightarrow x_{{\Pi_i}_1}\rightarrow x_{{\Pi_i}_2}\cdots\rightarrow x_{{\Pi_i}_l},$ 
 with $x_{{\Pi_i}_l}=x_{\Pi_i}$ and $l$ being the length of chain.  

Given a set of chains $\{\mathcal{L}_{\Pi_i}:i\in[n]\}$, the input data $\{x_i\}_{i=1}^n$ are estimated in an order $x_{\Pi_1},\ldots,x_{\Pi_n}$. For the input $x_{\Pi_i}$,  the corresponding quantizer $Q^{\mathcal{L}_{\Pi_i}}_\textnormal{Pro}$  consists of an encoder the same as $Q^{\textnormal{e}}_\textnormal{WZ}(x_{\Pi_i})$, and a novel decoder  $Q^{\textnormal{d},\mathcal{L}_{\Pi_i}}_{\textnormal{Pro}}$ of mapping form 
 $$Q^{\textnormal{d},\mathcal{L}_{\Pi_i}}_{\textnormal{Pro}}: \underbrace{\{0,1\}^r\times\ldots\times\{0,1\}^r}_{i~\text{times}}\times \mathcal {Y}^{i}\to  \mathbb{R}^d, $$
 that is used to decode $x_{\Pi_i}$ as
 \begin{IEEEeqnarray*}{rCl} 
\hat{x}_{Q^{\mathcal{L}_{\Pi_i}}_\textnormal{Pro}\!,\Pi_i}\!\!=Q^{\textnormal{d},\mathcal{L}_{\Pi_i}}_{\textnormal{Pro}}
\!\big(\!Q^{\textnormal{e}}_\textnormal{WZ}(x_{\Pi_1}),\ldots,Q^{\textnormal{e}}_\textnormal{WZ}(x_{\Pi_i}), y_{\Pi_1},\ldots,y_{\Pi_i}\!\big).
\end{IEEEeqnarray*}
 Given any quantizer $Q$, denotes its  estimate for input $x_i$   as $\hat{x}_{Q,i}$.
 Here $ \hat{x}_{Q^{\mathcal{L}_{\Pi_i}}_\textnormal{Pro},\Pi_i}$ denotes the estimate for ${x}_{\Pi_i}$ when using the quantizer $Q^{\mathcal{L}_{\Pi_i}}_{\textnormal{Pro}}$ for the given chain $\mathcal{L}_{\Pi_i}$.   With a slight abuse of notation, we write $\hat{x}_{Q^{\mathcal{L}_{\Pi_i}}_\textnormal{Pro},\Pi_i}$ as $\hat{x}_{Q_\textnormal{Pro},\Pi_i}$.

In the following, we describe the quantizers  $\{Q^{\mathcal{L}_{\Pi_i}}_\textnormal{Pro}\}_{i\in[n]}$ in two steps: 1) Given  a set of chains $\{\mathcal{L}_{\Pi_i}\}_{i\in[n]}$, how to estimate $\hat{x}_{Q_{\textnormal{Pro}},\Pi_i}$, for $i\in[n]$; 2) How to select  proper chains  $\{\mathcal{L}_{\Pi_i}\}_{i\in[n]}$ to reduce the MSE. 

\subsubsection{New quantizer for some given chains $\{\mathcal{L}_{i}\}_{i\in[n]}$}

Without loss of generality, we assume that the estimation order $\Pi$ is an identity permutation, i.e., $\Pi_i=i,i\in[n]$. With this assumption,      the chain $\mathcal{L}_{\Pi_i}$can be written as 
\begin{IEEEeqnarray}{rCl}\label{chainS}
\mathcal{L}_i:y_{i_1}\rightarrow x_{i_1}\rightarrow x_{i_2}\cdots\rightarrow x_{i_l},
\end{IEEEeqnarray}
where $x_{i_l}=x_{i}$, $ i_t\in[i-1]$ for all $ t=1,\ldots,l-1$,
and the decoder $i$ already has $i-1$ estimates: $\hat{x}_{Q_\textnormal{Pro},1},\ldots,\hat{x}_{Q_\textnormal{Pro},i-1}$.

The encoder is same as $Q^e_\text{WZ}$, i.e., Client $i$  first    applies the encoder  $Q^{\textnormal{e}}_{\textnormal{M},R}(x_i)$ to encode $x_i$,  then uses a shared randomness to select a  subset $S\subset\{1,\ldots,d\}$ of  these strings with $|S|=\lfloor r/\log k\rfloor$, and finally send them to decoder.

    The decoder  $Q^{d}_{\textnormal{Pro}, i}$ chooses an element in $\mathcal{M}_i$ as the ``side" information $h$ for $Q^\textnormal{d}_{\textnormal{M},R}$, where 
    \begin{IEEEeqnarray}{rCl}
   \mathcal{M}_i= \left\{\begin{array}{llr}
   \{y_i, \hat{x}_{Q_{\textnormal{Pro}},1},\cdots,\hat{x}_{Q_{\textnormal{Pro}},i-1}\}, &~\textnormal{if}~i>1\\
   \{y_i\}, &~\textnormal{if}~i=1\\
   \end{array}\right.,
\end{IEEEeqnarray}
We emphasize that here  the ``side" information $h$ could be the estimate of other client's data, rather than the literal  side information $y_i$ used in  the Wyner-Ziv quantizer $Q_\text{WZ}$.

    
    Given the chain $\mathcal{L}_i$ in \eqref{chainS}, denote  $\Delta_{i_1}'$ and $\Delta_{i_si_{s+1}}'$ as weight parameters of subchains $y_{i_1}\rightarrow x_{i_1}$ and $x_{i_s}\rightarrow x_{i_{s+1}}$, respectively. The choices of  $\Delta_{i_1}'$ and $\Delta_{i_si_{s+1}}'$ is based on \eqref{subga}, and follows a  way similar to that  in the quantizer $Q_{\text{M},R}$. For $t\in[l]$, let  ${w_{i_t}}\triangleq \Delta_{i_1}'+\sum_{s=1}^{t-1}\Delta_{i_si_{s+1}}'$. 
    

Given a vector $v\in \mathbb{R}^d$ and a subset $S\in[d]$, let    $v(S)\triangleq (v_i:i\in S) $.  The decoder  estimates $\hat{x}_{Q_{\textnormal{Pro}},i_{t}}(S)$  as the output values in dimension $S$ of the decoder $Q^d_{\textnormal{M},R}(Q^{\textnormal{e}}_{\textnormal{M},R}(x_{i_t}),\hat{x}_{Q_{\textnormal{Pro}},i_{t-1}})$ with parameters $\Delta'=w_{i_t}$ and $h=\hat{x}_{Q_{\textnormal{Pro}},i_{t-1}}$, and estimate the values of $\hat{x}_{Q_{\textnormal{Pro}},i_{t}}([n]\backslash S)$ as those values in $y_{i_t}$, i.e.,  
\begin{subequations}\label{estimateX}
\begin{IEEEeqnarray}{rCl}
\hat{x}_{Q_{\textnormal{Pro}},i_{t}}(S)&=&Q^d_{\textnormal{M},R}(Q^{\textnormal{e}}_{\textnormal{M},R}(x_{i_t}),\hat{x}_{Q_{\textnormal{Pro}},i_{t-1}})(S),\\
\hat{x}_{Q_{\textnormal{Pro}},i_{t}}([n]\backslash S)&=&y_{i_t}([n]\backslash S).
\end{IEEEeqnarray}
\end{subequations}

By recursively using \eqref{estimateX}, we can  obtain the estimate $\hat{x}_{Q_{\textnormal{Pro}},i}=\hat{x}_{Q_{\textnormal{Pro}},i_{l}}$ for the input $x_i$. 
 

 
 In our protocol,  the parameters $h$ and $\Delta'$  depend the design of chains $\{\mathcal{L}_i\}_{i=1}^n$, $S$ is generated by the shared randomness, and $k, \epsilon$ can be  freely assigned.  When the length chain $\mathcal{L}_i$ is $l=1$, 	the quantizer $Q^{\mathcal{L}_i}_{\textnormal{Pro}}$	reduces to the Wyner-Ziv quantizer $Q_{\textnormal{WZ}}$ if the chosen chain is  $\mathcal{L}_i := y_i\rightarrow x_i$. 

    \subsubsection{Selection of Chains}
    Next we give two algorithms on how to choose chains $\{\mathcal{L}_i:i\in[n]\}$.

    \emph{Algorithm 1}: Given weight parameters $\{\Delta'_i:i\in[n]\}$ and $\{\Delta'_{ij}:i,j\in[n],i<j\}$, for Client $1$, we use the chain $\mathcal{L}_1=y_1\rightarrow x_1$ with weight $\Delta'_1$. For Client $i$, we suppose that the chains $\{\mathcal{L}_j: j\in[i-1]\}$ with weight $w_j: j\in[i-1]$ are already known. 
    
    Construct $i$ chains $\{\mathcal{L}^i(j):j\leq i\}$ as follows.
    \begin{equation}\label{g11}
    	\begin{aligned}
    	&\mathcal{L}^i(j)=\mathcal{L}_j\rightarrow x_i,~ \text{if}~ j\in[i-1],\\
    	&\mathcal{L}^i(j)=y_j\rightarrow x_j,~ \text{if}~ j=i.
    	\end{aligned}
    \end{equation} Then,  compute the weights $\{w^i(j):j\in[i]\}$ according to 
   \begin{equation}\label{g12}
    \begin{aligned}
    &w^i(j)=w_j+\Delta'_{ji},~ \text{if}~j\in[i-1],\\
   &w^i(j)=\Delta'_{j}, ~ \text{if}~ j=i.
    \end{aligned}
    \end{equation} 
   For each Client $i$,   $\mathcal{L}_i$  can be chosen from   $i$ candidate chains in $\{\mathcal{L}^i(j):j\in[ i]\}$.  
    We choose   $\mathcal{L}_i=\mathcal{L}^i(j^*) $ such that  $j^*\triangleq\arg\min_{j\in[i]}w^i(j)$. 
      We formally describe the method in Algorithm~\ref{alg4}.
 \begin{algorithm}
	\caption{Selection of Chains}\label{alg4}
	\hspace*{0.02in}{\bf Input:}
	Input $\{\Delta'_i:i\in[n]\}$ and $\{\Delta'_{ij}:i,j\in[n],i<j\}$\\
	\hspace*{0.02in}{\bf Output:}
	\emph{Chains}
	\begin{algorithmic}[1]
		\State ${Chains}\leftarrow \emptyset$
		\State ${Chains}\leftarrow {Chains}\cup\{\mathcal{L}_1=y_1\rightarrow x_1\}$
		\State ${Weights}\leftarrow \emptyset$
		\State ${Weights}\leftarrow {Weightss}\cup\{w_1=\Delta'_1\}$
		\For{$2\leq i\leq n$}
		    \State Generate $\{\mathcal{L}^i(j):j\leq i\}$ as (\ref{g11})
		    \State Compute $\{w^i(j):j\leq i\}$ as (\ref{g12})
		    \State Compute $j^*=\arg\min_{j\in[i]}w'(j)$
		    \State $\mathcal{L}_i\leftarrow \mathcal{L}^i(j^*)$,  $w_i\leftarrow w^i(j^*)$
		    \State ${Chains}\leftarrow {Chains}\cup\{\mathcal{L}_i\}$
		    \State ${Weights}\leftarrow {Weights}\cup\{w_i\}$
		\EndFor
		\State \Return $\text{Chains}$	  
	\end{algorithmic}	
\end{algorithm}

  \emph{Algorithm 2}: Note that Algorithm \ref{alg4} is  simple and fast, but may not find good chains to improve the MSE in (\ref{baseline}). Therefore, we are interested in finding  good chains and  the corresponding region of $(\boldsymbol{\Delta_c},\boldsymbol{\Delta_s})$ such that the upper bound of MSE is smaller  than (\ref{baseline}).  
  We illustrate our idea with  a special case where the  length of each chain is less than 2.

  
  For Client $i$, consider $y_{t}\rightarrow x_{t}\rightarrow x_{i}$ and $y_{i}\rightarrow x_{i}$ with $t<i$. By  Remark \ref{al2} (described later in Section \ref{secUpper}), we select the chain as follows: If $(\Delta_t,\Delta_{ti},\Delta_i)\in \mathcal{R}_2$ is in the  region $\mathcal{R}_2$ defined in \eqref{Region2},  then we use the chain $\mathcal{L}_i$ as $y_{t}\rightarrow x_{t}\rightarrow x_{i}$, otherwise   we  use the chain  $y_{i}\rightarrow x_{i}$.   Now  we look for good chains according to  $\mathcal{R}_2$. Without loss of generality, let $\Delta_1\leq\cdots\leq\Delta_n$.  Starting from Client $1$, firstly, generate a chain of length of $1$ for Client $1$, and then traverse the remaining clients to verify whether the corresponding distances are in $\mathcal{R}_2$. If the distances are in $\mathcal{R}_2$, then construct a chain of length of $2$. For the remaining clients  whose chains are empty, renumber them and repeat the above process until every client has a nonempty chain. 
  We formally describe the method in Algorithm~\ref{alg5}. 

   \begin{algorithm}
   	\caption{Selection of chains for special case}\label{alg5}
   	\hspace*{0.02in}{\bf Input:}
   	Input $(\boldsymbol{\Delta_s},\boldsymbol{\Delta_c}), \mathcal{R}_2$\\
   	\hspace*{0.02in}{\bf Output:}
   	$Chains$
   	\begin{algorithmic}[1]
   		\State ${Chains}\leftarrow \emptyset$
   		\State $C\leftarrow\text{ List}[1,2\cdots,n]$
   		\While{$C$ is not empty}
   		\State $Node\leftarrow \emptyset$
   		\State $Node\leftarrow Node\cup \{C[0]\}$
   		\State Generate $y_{C[0]}\rightarrow x_{C[0]}$ for Client $C[0]$
   		\State $Chains\leftarrow Chains\cup\{y_{C[0]}\rightarrow x_{C[0]}\}$
   		\For{$i>0$}
   		\If{$(\Delta_{C[0]},\Delta_{C[0]C[i]},\Delta_{C[i]})\in \mathcal{R}_2$}
   		\State Generate $y_{C[0]}\rightarrow x_{C[0]}\rightarrow x_{C[i]}$ for Client $C[i]$
   		\State $Chains\leftarrow Chains\cup\{y_{C[0]}\!\rightarrow x_{C[0]}\rightarrow\! x_{C[i]}\}$
   		\State $Node\leftarrow Node\cup \{C[i]\}$
   		\EndIf
   		\EndFor  
   		\State Deleta clients in $Node$ from $C$  
   		\EndWhile
   		\State \Return $\text{Chains}$	  
   	\end{algorithmic}	
   \end{algorithm}

\subsection{New Upper Bound of MSE}\label{secUpper}
Define the   following quantities:
\begin{IEEEeqnarray}{rCl}\label{eqDefab}
	\alpha_i(Q)\triangleq \sup_{\boldsymbol{x},\boldsymbol{y}\ \text{satisfy}\ (\ref{eqDist})}{\mathbb{E}[\|\hat{x}_{Q,i}-{x}_i||_2^2]},\label{msey}\\
	\beta_i(Q)\triangleq \sup_{\boldsymbol{x},\boldsymbol{y}\ \text{satisfy}\ (\ref{eqDist})}{||\mathbb{E}[\hat{x}_{Q,i}-x_i]\|_2^2}.\label{biasy}
\end{IEEEeqnarray}

Given a chain $\mathcal{L}_i$ and a specific parameters assignment, the following lemma gives a recursive inequality about the upper bound of the error  when using the quantizer $Q^{\mathcal{L}_i}_{\textnormal{Pro}}$.

\begin{lemma}\label{main1}
	For a given chain $\mathcal{L}_i:y_{i_1}\rightarrow x_{i_1}\rightarrow x_{i_2}\cdots\rightarrow x_{i_l}$, when using the quantizer $Q^{\mathcal{L}_i}_{\textnormal{Pro}}$  with the following parameters: $k\geq 4$, $\Delta_{i_1}'=\sqrt{6(\Delta_{i_1}^2/d)\ln \sqrt{n}}$, $\Delta_{i_si_{s+1}}'=\sqrt{6(\Delta_{i_si_{s+1}}^2/d)\ln \sqrt{n}}$, where $s\in[l-1]$, $\mu d=\lfloor r/ \log k\rfloor$ and   $\epsilon=2\Delta_{i_1}'/(k-2)+\sum_{s={1}}^{l-1}2\Delta_{i_si_{s+1}}'/(k-2)$,  we have 
	\begin{equation*}
	\begin{aligned}
	\alpha_{i_t}(Q^{\mathcal{L}_{i_t}}_{\textnormal{Pro}} )\!&\leq\!\frac{24t(\Delta_{i_1}^2\!+\!\sum_{s=1}^{t-1}\Delta_{i_si_{(s+1)}}^2)\!\ln\!\sqrt{n}}{\mu(k\!-\!2)^2}\\
	&\quad+c_{t}(n)\frac{\Delta_{i_1}^2\!+\!\sum_{s=1}^{t-1}\Delta_{i_si_{(s+1)}}^2\!+\!\Delta_{i_{(t\!-\!1)}i_t}^2}{\mu n}\!\\
	&\quad+\frac{\!3\alpha_{i_{t-1}}(Q^{\mathcal{L}_{i_{t\!-\!1}}}_{\textnormal{Pro}})+\Delta_{i_t}^2}{\mu},
	\end{aligned}
	\end{equation*}
	\begin{equation*}
	\begin{aligned}
	\beta_{i_t}(Q^{\mathcal{L}_i}_{\textnormal{Pro}})&\leq c_t(n)\frac{\Delta_{i_1}^2\!+\!\sum_{s=1}^{t-1}\Delta_{i_si_{(s+1)}}^2+\Delta_{i_{(t\!-\!1)}i_t}^2}{n}\\
	&\quad+3\alpha_{i_{t-1}}(Q^{\mathcal{L}_{i_t}}_{\textnormal{Pro}}),
	\end{aligned}
	\end{equation*}
	and \begin{equation*}
	\alpha_{i_1}(Q^{\mathcal{L}_{i_1}}_{\textnormal{Pro}})\leq\frac{24\Delta_{i_1}^2\ln\sqrt{n}}{\mu(k-2)^2}+  154\frac{\Delta_{i_1}^2}{\mu n}+\frac{\Delta_{i_1}^2}{\mu},
	\end{equation*}
	\begin{equation*}
	\beta_{i_1}(Q^{\mathcal{L}_{i_1}}_{\textnormal{Pro}})\leq 154\frac{\Delta_{i_1}^2}{n},
	\end{equation*}
	for all $t\in[l]$ and   $c_{t}(n)\!\triangleq\!\max\{\frac{576t^2}{e},3n\!+\!\frac{36}{e^{2/3}}\}$.
\end{lemma}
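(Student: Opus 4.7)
The plan is to establish both bounds by induction on the chain position $t$. For the base case $t=1$, since $\mathcal{L}_{i_1}$ has length one, the quantizer $Q^{\mathcal{L}_{i_1}}_{\textnormal{Pro}}$ reduces to the Wyner-Ziv quantizer $Q_\textnormal{WZ}$ of Section~\ref{rhcq} with side information $y_{i_1}$ satisfying $\|x_{i_1}-y_{i_1}\|_2\leq\Delta_{i_1}$. Under the stated choices of $k$, $\Delta'_{i_1}$, $\epsilon$ and $\mu d$, the claimed $\alpha_{i_1}$ and $\beta_{i_1}$ bounds can be read directly off the per-client analysis that \cite{20} performed to establish \eqref{baseline}: the sub-Gaussian tail \eqref{subga} combined with the lattice spacing $\epsilon=2\Delta'_{i_1}/(k-2)$ produces the three summands in $\alpha_{i_1}$, respectively accounting for quantization inside the correct decoding cell, the sub-Gaussian failure event, and the un-transmitted coordinates filled in with $y_{i_1}$; the $\beta_{i_1}$ bound is the corresponding squared-bias estimate from the same analysis.

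For the inductive step $t\geq 2$, I would condition on the randomness producing $\hat{x}_{Q_\textnormal{Pro},i_{t-1}}$ and treat it as a fixed side information $h$ for the decoding of $x_{i_t}$. By \eqref{estimateX} the error then splits into its $S$- and $[d]\setminus S$-parts,
\begin{equation*}
\|\hat{x}_{Q_\textnormal{Pro},i_t}-x_{i_t}\|_2^2=\|(\hat{x}_{Q_\textnormal{Pro},i_t}-x_{i_t})(S)\|_2^2+\|(y_{i_t}-x_{i_t})([d]\setminus S)\|_2^2.
\end{equation*}
The $[d]\setminus S$ summand is bounded by $\Delta_{i_t}^2$, which yields the $\Delta_{i_t}^2/\mu$ term of the recursion after the $1/\mu$ amplification associated with transmitting only a fraction $\mu$ of the coordinates. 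For the $S$-part, the analysis of the modulo decoder $Q^d_{\textnormal{M},R}$ reduces to bounding (i) the coordinate-wise tail $\Pr\big(|R(x_{i_t}-h)(j)|>w_{i_t}\big)$ and (ii) the worst-case decoding error on its complement.

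To control (i), I would use the one-step decomposition $x_{i_t}-h=(x_{i_t}-x_{i_{t-1}})+(x_{i_{t-1}}-\hat{x}_{Q_\textnormal{Pro},i_{t-1}})$: after rotation by $R$, the deterministic link has sub-Gaussian coordinates with variance factor $\Delta_{i_{t-1}i_t}^2/d$ via \eqref{subga}, while the random residual is governed by the inductive hypothesis on $\alpha_{i_{t-1}}$. The choice $w_{i_t}=\Delta'_{i_1}+\sum_{s<t}\Delta'_{i_si_{s+1}}$, with the individual $\Delta'$ parameters as prescribed in the lemma, together with a union bound over the $\mu d$ coordinates in $S$, drives the coordinate-wise tail to $O(1/n)$; bounding the error on the complementary bad event by the diameter of a decoding cell then yields the $c_t(n)/(\mu n)$ contribution. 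On the good event the modulo decoder commits at most $\epsilon/2$ per coordinate, and combining $\mu d\cdot(\epsilon/2)^2$ with the Cauchy-Schwarz estimate $w_{i_t}^2\leq t\big(\Delta_{i_1}^{\prime 2}+\sum \Delta_{i_s i_{s+1}}^{\prime 2}\big)$ produces the first lemma term.

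The recursion is then closed by reinserting the residual error $\alpha_{i_{t-1}}$ via an $(a+b+c)^2\leq 3(a^2+b^2+c^2)$-type inequality, which is the origin of the factor $3$ in front of $\alpha_{i_{t-1}}$. The bound on $\beta_{i_t}$ follows along the same lines, tracking only the squared-expectation (bias) contributions rather than the full second moment. The main obstacle I foresee is carrying out the tail analysis in (i) uniformly over the random side information $h=\hat{x}_{Q_\textnormal{Pro},i_{t-1}}$; the one-step decomposition above is what makes this tractable, by isolating an independent sub-Gaussian tail on the deterministic link $x_{i_t}-x_{i_{t-1}}$ and relegating the dependent residual to the inductive bound on $\alpha_{i_{t-1}}$.
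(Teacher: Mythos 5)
Your base case is correct and matches the paper's treatment: for $t=1$ the quantizer reduces to the Wyner--Ziv quantizer and the claimed $\alpha_{i_1},\beta_{i_1}$ bounds follow from the per-client analysis in \cite{20} together with Lemma~\ref{lossyq}. The inductive step, however, has a genuine gap at exactly the place you flag as the ``main obstacle.''

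The issue is step~(i): to control $\Pr\big(|R(x_{i_t}-h)(j)|>w_{i_t}\big)$ with $h=\hat{x}_{Q_\textnormal{Pro},i_{t-1}}$, you need a \emph{coordinate-wise tail} bound on the residual $R(x_{i_{t-1}}-\hat{x}_{Q_\textnormal{Pro},i_{t-1}})(j)$, but the inductive hypothesis on $\alpha_{i_{t-1}}$ is only a second-moment bound on $\|\hat{x}_{Q_\textnormal{Pro},i_{t-1}}-x_{i_{t-1}}\|_2^2$, which cannot be converted into the needed pointwise tail. In fact, on the event that an earlier modulo decoder wraps around, a single rotated coordinate of that residual can be as large as $\Theta(k\epsilon)$, so a decomposition into ``deterministic sub-Gaussian link $+$ $\alpha_{i_{t-1}}$-controlled residual'' does not drive the bad event's probability to $O(1/n)$. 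Moreover, since all links share the same random rotation $R$, the residual is not independent of the event $\{|R(x_{i_t}-x_{i_{t-1}})(j)|>\Delta'_{i_{(t-1)}i_t}\}$, so you cannot freely factor the two contributions even in expectation. The paper avoids this by intersecting \emph{all} per-coordinate chain-link events $\mathcal{A}_1,\ldots,\mathcal{A}_t$ at once; on that intersection, a deterministic induction (Lemma~\ref{chainde}, using $k\geq 4$ and $\epsilon_{i_1}=2\Delta'_{i_1}/(k-2)$, etc.) shows the accumulated modulo-decoding error stays inside the cell of radius $w_{i_t}$, so only the union $\bigcup_s\mathcal{A}_s^c$ needs to be controlled probabilistically; its probability is $\leq 2t\,n^{-3/2}$ via the sub-Gaussian tail bounds and a union bound over the $t$ links, and on that bad event the paper uses the $(a+b+c)^2\leq 3(a^2+b^2+c^2)$ split together with Lemma~\ref{sG}. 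So what is missing from your proposal is precisely this deterministic chain-wise error-propagation lemma and the conditioning on the joint good event rather than on the single previous estimate. Your $\beta_{i_t}$ argument inherits the same gap.
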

\begin{proof}
	See the proof in Appendix \ref{pro4}.
\end{proof}

By properly scaling and choosing an appropriate $k$, we obtain a more concise form in the following corollary, whose proof   is given in Appendix \ref{c1}.

\begin{corollary}\label{simple}
	If setting $\log k=\lceil \log(2+\sqrt{12\ln n})\rceil $, we have 
	\begin{equation*}
	\alpha_{i_l}(Q^{\mathcal{L}_{i_l}}_{\textnormal{Pro}})\leq\frac{24D_{i_l}^i\ln\sqrt{n}}{\mu(k-2)^2}+154\frac{D_{i_l}^i}{\mu n}+\frac{\Delta_{i_l}^2}{\mu},
	\end{equation*}
	\begin{equation*}
	\beta_{i_l}(Q^{\mathcal{L}_{i_l}}_{\textnormal{Pro}})\leq 154\frac{D_{i_l}^i}{n},
	\end{equation*}
	 and  $D_{i_l}^i$ satisfies that for $l=1$, $D_{i_1}^i=\Delta_1^2$ and for $l>1$, \begin{IEEEeqnarray}{rCl}\label{imp}
	D_{i_l}^i=&&\max\Big\{l(\Delta_{i_1}^2\!+\!\sum_{s=1}^{l-1}\Delta_{i_si_{(s+1)}}^2), \frac{c_{l}(n)}{154}(\Delta_{i_1}^2\\
	\nonumber && +\!\sum_{s=1}^{l-1}\Delta_{i_si_{(s+1)}}^2+\Delta_{i_{(l\!-\!1)}i_l}^2)\!+\!\frac{3nD_{i_{l-1}}^i}{154}\Big\}\!+\!3D_{i_{l-1}}^i. 
\end{IEEEeqnarray}
\end{corollary}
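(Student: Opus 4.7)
My plan is to prove Corollary~\ref{simple} by induction on the chain length $l$, using Lemma~\ref{main1} as the recursive engine and the definition of $D_{i_l}^i$ as the device that absorbs the accumulated error across iterations. The shape of $D_{i_l}^i$, a max of two candidates plus an outer $+3D_{i_{l-1}}^i$ term, is designed precisely so that each of the three summands in the corollary's $\alpha$-bound corresponds to, and is dominated by, one group of terms in the recursion of Lemma~\ref{main1}.

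The base case $l=1$ is immediate: the $t=1$ bounds of Lemma~\ref{main1} match the corollary bounds with $D_{i_1}^i=\Delta_{i_1}^2$. For the inductive step, I substitute the inductive hypothesis $\alpha_{i_{l-1}}(Q^{\mathcal{L}_{i_{l-1}}}_{\textnormal{Pro}}) \leq \frac{24 D_{i_{l-1}}^i \ln\sqrt{n}}{\mu(k-2)^2} + 154\frac{D_{i_{l-1}}^i}{\mu n} + \frac{\Delta_{i_{l-1}}^2}{\mu}$ into the recursive $\alpha$-inequality of Lemma~\ref{main1}, and regroup by dependence on $\ln\sqrt{n}/[\mu(k-2)^2]$, $1/(\mu n)$, and $1/\mu$. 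The coefficient of $\ln\sqrt{n}/[\mu(k-2)^2]$ becomes $24\bigl[l(\Delta_{i_1}^2+\sum_{s}\Delta_{i_si_{s+1}}^2) + 3D_{i_{l-1}}^i\bigr]$, which is at most $24 D_{i_l}^i$ because the first branch of the max in~\eqref{imp} captures $l(\Delta_{i_1}^2+\sum_s\Delta_{i_si_{s+1}}^2)$ and the outer $+3D_{i_{l-1}}^i$ supplies the rest. Similarly, the coefficient of $1/(\mu n)$ is controlled by the second branch of the max together with the outer $+3D_{i_{l-1}}^i$; the $3n/154$ factor appearing inside the max is precisely what equalizes the two branches on the $1/n$ scale.

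The choice $\log k=\lceil \log(2+\sqrt{12\ln n})\rceil$ enters to control residual $\Delta_{i_{l-1}}^2/\mu$ contributions: it guarantees $(k-2)^2 \geq 12\ln n$ and thus $\frac{24\ln\sqrt{n}}{(k-2)^2}\leq 1$, so any leftover $\Delta_{i_{l-1}}^2/\mu$ beyond the target's explicit $\Delta_{i_l}^2/\mu$ term can be absorbed into the first summand using $D_{i_l}^i \geq 3D_{i_{l-1}}^i \geq 3\Delta_{i_{l-1}}^2$ inherited from the outer addition. The $\beta_{i_l}$ bound follows by the same induction but is simpler because the $\beta$-recursion of Lemma~\ref{main1} contains no $\Delta^2/\mu$ term and hence no need for the $k$-based absorption. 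The main obstacle I anticipate is pure bookkeeping: carrying the constants $24$, $154$, $462$, and $c_l(n)$ through the substitution and verifying that the two-branch-plus-outer-addition structure of $D_{i_l}^i$ is tight enough at each step---once these combinatorial inequalities are checked, the induction closes cleanly.
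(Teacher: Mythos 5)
Your proposal diverges from the paper's proof in a way that is not merely stylistic but breaks the induction. The paper does \emph{not} induct on $\alpha_{i_l}(Q^{\mathcal{L}_{i_l}}_{\textnormal{Pro}})$ and $\beta_{i_l}(Q^{\mathcal{L}_{i_l}}_{\textnormal{Pro}})$ as you propose. It instead inducts on the corresponding quantities for the underlying rotated modulo quantizer, $\alpha_{i_l}(Q_{\textnormal{M},R})$ and $\beta_{i_l}(Q_{\textnormal{M},R})$, using the inequalities (51) and (53) from the appendix; these recursions contain \emph{no} $\mu$ and \emph{no} additive $\Delta_{i_t}^2$ term. Only after the induction closes, with
$\alpha_{i_l}(Q_{\textnormal{M},R})\leq\frac{24D_{i_l}^i\ln\sqrt{n}}{(k-2)^2}+154\frac{D_{i_l}^i}{n}$
and $\beta_{i_l}(Q_{\textnormal{M},R})\leq 154\frac{D_{i_l}^i}{n}$, does the paper invoke Lemma~\ref{lossyq} a single time to pass to $Q_{\textnormal{Pro}}$: that one conversion supplies the entire $\Delta_{i_l}^2/\mu$ term of the corollary, which therefore never needs to be produced by the chain recursion.

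Running the induction on $\alpha(Q_{\textnormal{Pro}})$ directly, as you do, creates two obstacles that your sketch does not overcome. First, the recursion for $\alpha_{i_t}(Q^{\mathcal{L}_{i_t}}_{\textnormal{Pro}})$ in Lemma~\ref{main1} carries a factor $1/\mu$ on the $\alpha_{i_{t-1}}$ term; unrolling it multiplies by $1/\mu$ at every step, yielding $1/\mu^l$ where the target has $1/\mu$. Since $\mu=\lfloor r/\log k\rfloor/d\leq 1$, these powers do not collapse. (Even if one reads the lemma as $3\alpha_{i_{t-1}}(Q_{\textnormal{Pro}})$ without the divisor, which is what applying Lemma~\ref{lossyq} to (51) actually gives, the $\beta$-case still fails outright: substituting the $\alpha$-hypothesis into the $\beta$-recursion produces $1/\mu$ factors, while the corollary's $\beta$-bound has none.) Second, your plan to absorb the residual $3\Delta_{i_{l-1}}^2/\mu$ relies on the claim $D_{i_{l-1}}^i\geq \Delta_{i_{l-1}}^2$, but this is false in general: by its recursive definition, $D_{i_{l-1}}^i$ is built entirely from $\Delta_{i_1}$ and the inter-client distances $\Delta_{i_s i_{s+1}}$, $s\leq l-2$, none of which constrains $\Delta_{i_{l-1}}=\|x_{i_{l-1}}-y_{i_{l-1}}\|$. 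Indeed the entire point of the construction, made explicit in Remark~\ref{re}, is that $D_{i_l}^i$ can be strictly smaller than $\Delta_{i_l}^2$; the same can hold at index $i_{l-1}$. Moreover even if the inequality held, there is no slack left to absorb anything: in the target first summand, all of $D_{i_l}^i$ is already consumed in dominating $lA_{i_l}+3D_{i_{l-1}}^i$, which is exactly the first branch of the max defining $D_{i_l}^i$ plus the outer $+3D_{i_{l-1}}^i$. The clean way out is precisely what the paper does: keep the induction at the level of $Q_{\textnormal{M},R}$, and let Lemma~\ref{lossyq} handle $\mu$ and $\Delta_{i_l}^2$ once, at the end.
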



 \begin{theorem}\label{final}
	For some fixed $(\boldsymbol{\Delta_s},\boldsymbol{\Delta_c})$,  and $d\geq r\geq 2\lceil \log(2+\sqrt{12\ln n})\rceil$, and $\mu d=\lfloor\frac{r}{\log k}\rfloor$, the MSE is upper bounded by
	\begin{equation}\label{OurBound}
		\begin{aligned}
		\mathcal{E}_{\pi^*}(\boldsymbol{x},\boldsymbol{y})
		&\leq(79\log k+26)\sum_{i=1}^{n}{\frac{d\Delta_i^2}{n^2r}}+B\sum_{i=1}^{n}{\frac{d(D^i_{i_l} -\Delta_i^2)}{n^2r}},\\
		\end{aligned}
	\end{equation}
	for all sets of chains $\{\mathcal{L}_i\}_{i=1}^n$ and 
	\begin{IEEEeqnarray*}{rCl}
	B=\left\{\begin{array}{llr}
	79 \log k+26, &~\text{if}~\sum_{i=1}^{n}{(D^i_{i_l}-\Delta_i^2)}\geq0\\
	 \frac{\log k}{8}.&~\text{otherwise}	
	\end{array}\right.,
\end{IEEEeqnarray*} 
 where    $\log k=\lceil \log(2+\sqrt{12\ln n})\rceil$ and $D_{i_l}^i$ is given in \eqref{imp}.
 \end{theorem}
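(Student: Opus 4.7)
The overall strategy is to bound the MSE of the proposed protocol $\pi_{\textnormal{Pro}}$ directly and to invoke the optimality of $\pi^{*}$, which gives $\mathcal{E}_{\pi^{*}}\leq\min\{\mathcal{E}_{\pi_{\textnormal{Pro}}},\mathcal{E}_{\pi_{\textnormal{baseline}}}\}$, so either candidate bound can be deployed in whichever regime is tighter.

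First I would perform a bias--variance split $\mathcal{E}_{\pi_{\textnormal{Pro}}}=\|\frac{1}{n}\sum_i\mathbb{E}[\hat{x}_{Q_{\textnormal{Pro}},i}-x_i]\|_2^2+\operatorname{Var}\!\bigl(\frac{1}{n}\sum_i\hat{x}_{Q_{\textnormal{Pro}},i}\bigr)$, apply Cauchy--Schwarz to bound the bias term by $\frac{1}{n}\sum_i\beta_i(Q^{\mathcal{L}_i}_{\textnormal{Pro}})$, and argue that the variance term is at most $\frac{1}{n^2}\sum_i\alpha_i(Q^{\mathcal{L}_i}_{\textnormal{Pro}})$ by exploiting the independence of the per-client stochastic-rounding randomness together with the recursive error structure already encoded in Corollary~\ref{simple}. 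I would then substitute the $(\alpha_i,\beta_i)$ bounds of Corollary~\ref{simple}, use $\log k=\lceil\log(2+\sqrt{12\ln n})\rceil$ so that $(k-2)^{2}\geq 12\ln n$ and hence $24D^{i}_{i_l}\ln\sqrt{n}/(k-2)^{2}\leq D^{i}_{i_l}$, and use $1/\mu\leq 2d\log k/r$ (valid because $r\geq 2\log k$) to arrive at an inequality of the form $\mathcal{E}_{\pi_{\textnormal{Pro}}}\leq C_{\Delta}\sum_i d\Delta_i^{2}/(n^{2}r)+C_{D}\sum_i d\,D^{i}_{i_l}/(n^{2}r)$ with explicit constants satisfying $C_{\Delta}+C_{D}\leq 79\log k+26$ and $C_{D}\leq\log k/8$.

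Rewriting the right-hand side as $(C_{\Delta}+C_{D})\sum d\Delta_i^{2}/(n^{2}r)+C_{D}\sum d(D^{i}_{i_l}-\Delta_i^{2})/(n^{2}r)$ already yields the claim with $B=\log k/8$ whenever $\sum_i(D^{i}_{i_l}-\Delta_i^{2})<0$. For the complementary regime $\sum_i(D^{i}_{i_l}-\Delta_i^{2})\geq 0$, the chain-based bound may exceed the baseline, so I would instead invoke Theorem~1 directly to get $\mathcal{E}_{\pi^{*}}\leq(79\log k+26)\sum_i d\Delta_i^{2}/(n^{2}r)$, and since the extra term $(79\log k+26)\sum_i d(D^{i}_{i_l}-\Delta_i^{2})/(n^{2}r)$ is non-negative, the stated bound remains valid with $B=79\log k+26$ in this case.

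The principal obstacle is the variance step: unlike the baseline, where the $\{\hat{x}_i\}$ are conditionally independent given the shared rotation matrix and coordinate subset, the chain estimates $\hat{x}_{Q_{\textnormal{Pro}},i_t}$ genuinely depend on $\hat{x}_{Q_{\textnormal{Pro}},i_{t-1}}$, so a naive $\operatorname{Var}(\sum_i\hat{x}_i)\leq\sum_i\operatorname{Var}(\hat{x}_i)$ is unavailable. I expect to handle this by conditioning sequentially along each chain and invoking the recursive error accumulation already captured by $D^{i}_{i_l}$ via Corollary~\ref{simple}, absorbing any residual cross-covariance contributions into constant-factor slack inside $C_{\Delta}$ and $C_{D}$.
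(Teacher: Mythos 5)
Your overall route matches the paper's: a bias--variance decomposition giving $\mathcal{E}\leq\sum_i\alpha_i/n^2+\sum_i\beta_i/n$, substitution of the $(\alpha,\beta)$ bounds from Corollary~\ref{simple}, and algebraic rearrangement into a $\Delta_i^2$ term plus a $(D^i_{i_l}-\Delta_i^2)$ term, with the $\Delta_i^2$ term controlled by the inequality $\frac{r}{\mu d}\bigl(\frac{24\ln\sqrt{n}}{(k-2)^2}+\frac{154}{n}+1+154\mu\bigr)\leq 79\lceil\log(2+\sqrt{12\ln n})\rceil+26$ quoted from \cite{20}. Your handling of the regime $\sum_i(D^i_{i_l}-\Delta_i^2)\geq 0$ by reverting to the baseline bound~(\ref{baseline}) and noting the extra term is nonnegative is valid, and marginally shorter than the paper, which instead uses the upper bound $B_1\leq 79\log k+26$ on the same coefficient.

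Your key constant inequality is in the wrong direction, and this is a real gap. After rewriting as $(C_\Delta+C_D)\sum_i\frac{d\Delta_i^2}{n^2r}+C_D\sum_i\frac{d(D^i_{i_l}-\Delta_i^2)}{n^2r}$, to conclude with $B=\frac{\log k}{8}$ when $X\triangleq\sum_i(D^i_{i_l}-\Delta_i^2)<0$ you must replace $C_D$ by a \emph{smaller} number, i.e.\ you need $C_D\geq\frac{\log k}{8}$, not $C_D\leq\frac{\log k}{8}$ as you assert. With $X<0$, $C_D\leq\log k/8$ gives $C_DX\geq(\log k/8)X$, which does not let you upper-bound $C_DX$ by $(\log k/8)X$. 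The paper accordingly proves a \emph{lower} bound on the coefficient: $B_1\geq\frac{r}{\mu d}\cdot\frac{24\ln\sqrt{n}}{(k-2)^2}\geq\frac{r}{8\mu d}\geq\frac{\log k}{8}$, using $\log k\leq\log(2+\sqrt{12\ln n})+1$ (hence $(k-2)^2\leq 4(1+\sqrt{12\ln n})^2$) and $\mu d\leq r/\log k$. An upper bound of the form $C_D\leq\log k/8$ is both false for the actual coefficient and unusable for this step.

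The dependence obstacle you flagged in the variance step is genuine, and you should also know the paper does not actually close it: the paper obtains the decomposition by invoking Lemma~\ref{upbound}, which is stated for quantizers ``using independent randomness for different $i\in[n]$,'' yet the decoder in $Q^{\mathcal{L}_i}_{\textnormal{Pro}}$ feeds $\hat{x}_{Q_{\textnormal{Pro}},i_{t-1}}$ as side information for $x_{i_t}$, so errors along a chain are correlated and the hypothesis of Lemma~\ref{upbound} is never verified for $Q_{\textnormal{Pro}}$. Your plan to condition sequentially and ``absorb residual cross-covariance into constant slack'' is not carried out; one would need an explicit control on $\mathbb{E}[\langle\hat{x}_{Q_{\textnormal{Pro}},i_s}-x_{i_s},\,\hat{x}_{Q_{\textnormal{Pro}},i_t}-x_{i_t}\rangle]$ for $s<t$ within a chain, which Corollary~\ref{simple} alone does not supply. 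As written, this step is a gap in both your proposal and the paper's proof.
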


\begin{Remark}\label{re}
	We improve the upper bound of MSE  in (\ref{baseline}) when $(\boldsymbol{\Delta_c},\boldsymbol{\Delta_s})$ are in the region $\mathcal{R}_{\mathcal{L}}=\{(\boldsymbol{\Delta_c},\boldsymbol{\Delta_s}):\sum_{i=1}^{n}D_{i_l}^i<\sum_{i=1}^{n}\Delta_i^2\} $. 
In the  region   $\mathcal{R}_{\mathcal{L}}$ our new upper bound in \eqref{OurBound} is $1\!-\!\frac{\log k}{8(79\log k\!+\!26)}(1-\frac{\sum_{i=1}^{n}D_{i_l}^i}{\sum_{i=1}^{n}\Delta_i^2})$ times of that in (\ref{baseline}). 
\end{Remark}

\begin{Remark}
For each permutation $\Pi$ on [n], since each Client $\Pi_i$ can choose $\mathcal{L}_{\Pi_i}$ from $i$ candidate chains, there are $n!$ different assignments $\{\mathcal{L}_{\Pi_i}:i\in[n]$\}.  Thus,  the number of all strategies will not exceed $(n!)^2$. Denote the chain corresponding to each strategy as $\mathcal{L}^i, i\in[(n!)^2]$. From Theorem \ref{final} and Remark \ref{re}, we obtain that when $(\boldsymbol{\Delta_c},\boldsymbol{\Delta_s})\in \bigcup_{i\in[(n!)^2]}\mathcal{R}_{\mathcal{L}^i}$, our upper bound is tighter than that in \eqref{baseline}. 
\end{Remark}
\begin{Remark}\label{al2}
	If Client $i$ uses the chain $y_{t}\rightarrow x_{t}\rightarrow x_{i}$, by (\ref{imp}), we have 
	$D_{i_2}^i=\max\{2(\Delta_t^2+\Delta_{ti}^2), \frac{c_2(n)}{154}(\Delta_t^2+2\Delta_{ti}^2)+\frac{3n\Delta_t^2}{154}\}+3\Delta_t^2$,
	where $c_{2}(n)=\max\{\frac{2304}{e},3n+\frac{36}{e^{2/3}}\}$.
	If Client $i$ uses the chain $y_{i}\rightarrow x_{i}$, by (\ref{imp}),  we have $D_{i_1}^i=\Delta_i^2$. Let 
		\begin{IEEEeqnarray}{rCl}\label{Region2}
  &&\mathcal{R}_2\triangleq\Big\{(\Delta_t,\Delta_{ti},\Delta_i):\max\big\{2(\Delta_t^2+\Delta_{ti}^2)+3\Delta_t^2,\nonumber\\
  &&\quad \quad c_{2}(n)(\Delta_t^2\!+\!2\Delta_{ti}^2)\!+\!\frac{3n\Delta^2_t}{154}\!+\!3\Delta_t^2\big\}<\Delta_i^2\Big\}.
\end{IEEEeqnarray} 
	
Then,  by Theorem \ref{final}, if there exists   some $i,t\in[n]$ such that    $(\Delta_i,\Delta_t,\Delta_{ti})\in \mathcal{R}_2$, then  our estimator $Q_\textnormal{pro}$   improves the  Wyner-Ziv estimator  $Q_\textnormal{WZ}$ proposed in \cite{20}.

\end{Remark}

\begin{Remark}
From    \eqref{Region2}, we observe that to choose a chain whose length is larger than 2,  there must exist at least one pair   $(\Delta_t,\Delta_i)$  such that $5\Delta^2_t< \Delta^2_i$, for some $t,i\in[n]$. Otherwise, our estimator turns to be the Wyner-Ziv estimator in \cite{20}.  The condition $5\Delta^2_t< \Delta^2_i$ seems  a stringent assumption at the first glance. In fact,  since our   quantizer is for   specific vectors $\boldsymbol{x}$ and  $\boldsymbol{y}$, and the Euclidean distance   $\|x_i-y_i\|\leq \Delta_i$ and $\|x_j-y_j\|\leq \Delta_j$, for $i,j\in[n]$ can vary greatly. Also,  one can spend additional bits on better estimating   $\Delta_t$ and $\Delta_{ti}$   such that they  are smaller enough to satisfy \eqref{Region2}, and this additional bits cost on estimating  $\Delta_t$ and $\Delta_{ti}$ can be omitted when $n$  is relatively large.
\end{Remark}

\section{Proof of Theorem \ref{final}}

Now we first introduce a lemma and then derive  an upper bound of MSE for any $r$-bit  quantizer. 
\begin{lemma}{(see \cite{20})}\label{upbound}
	For $ \boldsymbol{x}$ and $\boldsymbol{y}$ satisfying (\ref{eqDist}), and an $r$-bit quantizer $Q$ using independent randomness for different $i\in[n]$, the estimate $\hat{\bar{x}}$ in (\ref{esti}) and the sample mean $\bar{x}$ satisfies
	\begin{equation}\label{bound}
	\mathbb{E}[||\hat{\bar{x}}-\bar{x}||_2^2]\leq \sum_{i=1}^{n}{\frac{\alpha_{i}(Q)}{n^2}}+\sum_{i=1}^{n}{\frac{\beta_{i}(Q)}{n}}.
	\end{equation}
\end{lemma}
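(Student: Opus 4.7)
The plan is to derive \eqref{OurBound} by chaining Lemma~\ref{upbound} with Corollary~\ref{simple} and then reading off the two-term form with the correct constants. First I would apply Lemma~\ref{upbound} with $Q=Q^{\mathcal{L}_i}_{\textnormal{Pro}}$ per client to obtain
\begin{equation*}
\mathcal{E}_{\pi^*}(\boldsymbol{x},\boldsymbol{y})\leq \sum_{i=1}^n\frac{\alpha_i(Q^{\mathcal{L}_i}_{\textnormal{Pro}})}{n^2}+\sum_{i=1}^n\frac{\beta_i(Q^{\mathcal{L}_i}_{\textnormal{Pro}})}{n},
\end{equation*}
and then substitute the Corollary~\ref{simple} bounds to land on an expression linear in $D^i_{i_l}$ and $\Delta_i^2$, namely $\mathcal{E}_{\pi^*}\leq \tfrac{A}{n^2}\sum_i D^i_{i_l}+\tfrac{C}{n^2}\sum_i\Delta_i^2$ with $A=\mu^{-1}\bigl(\tfrac{24\ln\sqrt{n}}{(k-2)^2}+\tfrac{154}{n}\bigr)+154$ and $C=\mu^{-1}$.

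Next I would unpack the parameter dependence hidden in $\log k$ and $\mu$. The ceiling in $\log k=\lceil\log(2+\sqrt{12\ln n})\rceil$ pins $k-2$ to the range $[\sqrt{12\ln n},\,2+2\sqrt{12\ln n})$, which gives the two-sided bound $\tfrac18\leq \tfrac{24\ln\sqrt{n}}{(k-2)^2}\leq 1$ for $n\geq 2$ (the lower inequality reduces to the elementary fact $12\ln n\geq 1+2\sqrt{12\ln n}$). The floor in $\mu d=\lfloor r/\log k\rfloor$ together with $r\geq 2\log k$ yields the sandwich $d\log k/r\leq 1/\mu\leq 2d\log k/r$. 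Splitting $D^i_{i_l}=\Delta_i^2+(D^i_{i_l}-\Delta_i^2)$, the bound becomes
\begin{equation*}
\mathcal{E}_{\pi^*}\leq \frac{A+C}{n^2}\sum_i\Delta_i^2+\frac{A}{n^2}\sum_i(D^i_{i_l}-\Delta_i^2),
\end{equation*}
and a routine calculation shows $A+C\leq (79\log k+26)d/r$; specialising to $D^i_{i_l}=\Delta_i^2$ (all chains of length one) reproduces the baseline \eqref{baseline}.

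The final step is a sign case split on $\sum_i(D^i_{i_l}-\Delta_i^2)$. When this sum is nonnegative, I replace the coefficient $A/n^2$ of the correction by the upper bound $(79\log k+26)d/(n^2r)$ guaranteed by $A\leq A+C$, giving $B=79\log k+26$. When the sum is negative, replacing $A$ by a \emph{smaller} nonnegative coefficient preserves the upper bound, so it suffices to show $(\log k/8)d/r\leq A$; this is exactly what the lower sandwiches deliver, since $A\geq \mu^{-1}\cdot\tfrac{24\ln\sqrt{n}}{(k-2)^2}\geq (d\log k/r)\cdot (1/8)$, and hence $B=\log k/8$ is admissible. The two cases assemble into \eqref{OurBound} with the advertised piecewise $B$, uniformly over every chain assignment $\{\mathcal{L}_i\}_{i=1}^n$.

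The main obstacle will be the simultaneous two-sided control of $A$: the upper bound $A+C\leq (79\log k+26)d/r$ has to match the baseline coefficient exactly, while the lower bound $A\geq(\log k/8)d/r$ has to legitimise the smaller $B$ in the negative-correction case. Both estimates ultimately reduce to elementary arithmetic in the sandwiches for $k-2$ and $1/\mu$ produced by the ceiling/floor definitions of $\log k$ and $\mu$, but they must be tracked simultaneously with enough care to yield precisely the constants $79\log k+26$ and $\log k/8$ stated in the theorem.
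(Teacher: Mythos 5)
Your proposal does not prove the statement in question. The statement is Lemma~\ref{upbound} itself --- the bias--variance decomposition $\mathbb{E}[\|\hat{\bar{x}}-\bar{x}\|_2^2]\leq \sum_{i}\alpha_i(Q)/n^2+\sum_i\beta_i(Q)/n$ --- but your argument \emph{invokes} Lemma~\ref{upbound} as its very first step and then goes on to derive Theorem~\ref{final} (the bound \eqref{OurBound}) from it together with Corollary~\ref{simple}. What you have written is essentially the paper's proof of Theorem~\ref{final}, including the two-sided control of the coefficient $B_1$ via the sandwiches on $k-2$ and $1/\mu$; but it is circular as a proof of the lemma, since the quantity you are asked to bound is assumed bounded at the outset.

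The actual content of Lemma~\ref{upbound} (which the paper imports from \cite{20} without reproving) is the following. Write $\hat{\bar{x}}-\bar{x}=\frac{1}{n}\sum_{i=1}^n(\hat{x}_i-x_i)$ and expand
\begin{equation*}
\mathbb{E}\bigl[\|\hat{\bar{x}}-\bar{x}\|_2^2\bigr]=\frac{1}{n^2}\sum_{i=1}^n\mathbb{E}\bigl[\|\hat{x}_i-x_i\|_2^2\bigr]+\frac{1}{n^2}\sum_{i\neq j}\mathbb{E}\bigl[\langle \hat{x}_i-x_i,\hat{x}_j-x_j\rangle\bigr].
\end{equation*}
The diagonal terms are bounded by $\alpha_i(Q)$ by definition \eqref{msey}. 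For the cross terms you must use the hypothesis that the quantizers use \emph{independent} randomness across clients, so that the expectation factors as $\langle\mathbb{E}[\hat{x}_i-x_i],\mathbb{E}[\hat{x}_j-x_j]\rangle$, which by Cauchy--Schwarz and AM--GM is at most $\tfrac12(\|\mathbb{E}[\hat{x}_i-x_i]\|_2^2+\|\mathbb{E}[\hat{x}_j-x_j]\|_2^2)\leq\tfrac12(\beta_i(Q)+\beta_j(Q))$; summing over the at most $n(n-1)$ ordered pairs gives the $\sum_i\beta_i(Q)/n$ term. None of this appears in your proposal, and in particular the role of the independence assumption --- which is exactly what separates the $1/n^2$ scaling of the variance part from the $1/n$ scaling of the bias part --- is never addressed. (A side remark: if your intent was in fact to prove Theorem~\ref{final}, your route matches the paper's, but you should note that for the chained quantizers $Q^{\mathcal{L}_i}_{\textnormal{Pro}}$ the decoders share estimates across clients, so the applicability of the independence-based cross-term factorization is itself a point that deserves care rather than silent citation.)
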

	
	By Lemma \ref{upbound}, we have
	\begin{equation}\label{bounds}
	\begin{aligned}
	\mathcal{E}_{\pi^*}&(\boldsymbol{x},\boldsymbol{y})\\
	&\leq\sum_{i=1}^{n}{\frac{\alpha_{i}(Q^{\mathcal{L}_i}_{\textnormal{Pro}})}{n^2}}+\sum_{i=1}^{n}{\frac{\beta_{i}(Q^{\mathcal{L}_i}_{\textnormal{Pro}})}{n}}\\
	&\overset{(a)}{\leq}\frac{r}{\mu d}(\frac{24\ln\sqrt{n}}{(k-2)^2}\!+\!\frac{154}{n}\!+\!154\mu)\sum_{i=1}^{n}{\frac{dD_{i_l}^i}{n^2r}}\!+\!\frac{r}{\mu d}\sum_{i=1}^{n}{\frac{d\Delta_i^2}{n^2r}}\\
	&{=}\frac{r}{\mu d}(\frac{24\ln\sqrt{n}}{(k-2)^2}\!+\!\frac{154}{n}\!+\!1\!+\!154\mu)\sum_{i=1}^{n}{\frac{d\Delta_i^2}{n^2r}}\\
	&\quad+\frac{r}{\mu d}(\frac{24\ln\sqrt{n}}{(k-2)^2}\!+\!\frac{154}{n}\!+\!154\mu)\sum_{i=1}^{n}{\frac{d(D_{i_l}^i-\Delta_i^2)}{n^2r}}\\
	&\overset{(b)}{\leq}(79\lceil \log(2+\sqrt{12\ln n})\rceil+26)\sum_{i=1}^{n}{\frac{d\Delta_i^2}{n^2r}}\\
	&\quad+\underbrace{\frac{r}{\mu d}(\frac{24\ln\sqrt{n}}{(k-2)^2}\!+\!\frac{154}{n}\!+\!154\mu)}_{B_{1}}\sum_{i=1}^{n}{\frac{d(D_{i_l}^i-\Delta_i^2)}{n^2r}},\\
	\end{aligned}	
	\end{equation}
	where (a) follows by Corollary \ref{simple} and $x_{i_l}=x_{i}$, and  (b) follows by the  inequality $\frac{r}{\mu d}(\frac{24\ln\sqrt{n}}{(k-2)^2}\!+\!\frac{154}{n}\!+\!1\!+\!154\mu) 
		 \leq(79\lceil \log(2+\sqrt{12\ln n})\rceil+26)$ given    in \cite{20}.   By this inequality, we can obtain an upper bound of $B_1$ as 
		$ B_1 \leq\frac{r}{\mu d}(\frac{24\ln\sqrt{n}}{(k-2)^2}\!+\!\frac{154}{n}\!+\!1\!+\!154\mu) 
		 \leq(79\lceil \log(2+\sqrt{12\ln n})\rceil+26)$, 
		 and an lower bound    \begin{equation*}
		\begin{aligned}
	 B_1&\geq\frac{r}{\mu d}(\frac{24\ln\sqrt{n}}{(k-2)^2})\overset{(a)}{\geq}\frac{r}{\mu d}(\frac{24\ln\sqrt{n}}{4(1+\sqrt{12\ln n})^2})\\
	 &\overset{(b)}{\geq}\frac{r}{8\mu d}\overset{(c)}{\geq} \frac{\lceil \log(2+\sqrt{12\ln n})\rceil}{8},
	 \end{aligned}
	\end{equation*}
where   (a) holds by $\log k\leq \log(2+\sqrt{12\ln n})+1$,   (b) is because   $\frac{24\ln\sqrt{n}}{4(1+\sqrt{12\ln n})^2}\geq\frac{24\ln\sqrt{2}}{4(1+\sqrt{12\ln 2})^2}\geq1/8$ and (c) holds by $\mu d=\lfloor\frac{r}{\log k}\rfloor\leq\frac{r}{\log k}$.
\section{Conclusion}
In this paper, we studied the distributed mean estimation with limited communication. Inspired by Wyner-Ziv and Slepian-Wolf coding, we proposed new estimator by  exploiting the correlation  between clients' data. In the future work, we aim to find a more efficient estimator   and apply it to more generalized distributed optimization framework.


\bibliographystyle{IEEEtran.bst}
\bibliography{1.bib}
\appendix

\subsection{Proof of Lemma \ref{main1}}\label{pro4}

Our quantizer is based on $Q_{\textnormal{M},R}$, similar to $Q_{\textnormal{WZ}}$. We first introduce the following lemma, whose proof is similar to that in \cite{20}. With a slight abuse of notation, we write $Q^{\mathcal{L}_{i_l}}_\textnormal{Pro}$ as $Q_\textnormal{Pro}$.


\begin{lemma}\label{lossyq}
	Fix $\Delta_i >0$. Then, for $\mu d\in[d]$, we have \begin{equation*}\label{uboundr}
	\alpha_{i_l}(Q_{\textnormal{Pro}})\leq\frac{\alpha_{i_l}(Q_{\textnormal{M},R})}{\mu}+\frac{\Delta_{i_l}^2}{\mu};
	\end{equation*}
	\begin{equation*}\label{uboundr2}
	\beta_{i_l}(Q_{\textnormal{Pro}})=\beta_{i_l}(Q_{\textnormal{M},R}).
	\end{equation*}
\end{lemma}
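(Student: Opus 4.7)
The plan is to realize $Q_\textnormal{Pro}$ as a coordinate-sparsification of the full rotated modulo quantizer $Q_{\textnormal{M},R}$ on the uniformly random subset $S\subseteq[d]$ of size $\mu d$, and to exploit the independence of $S$ from the quantizer randomness. Writing $\hat{x}_{Q_{\textnormal{M},R},i_l}$ for the full $d$-dimensional $Q_{\textnormal{M},R}$ output sharing the same side information $h$ that is used to decode the $S$-coordinates in \eqref{estimateX}, the natural unbiased realization of the proposed estimator is
\[
\hat{x}_{Q_\textnormal{Pro},i_l}(j)\;=\;y_{i_l}(j)+\tfrac{1}{\mu}\bigl(\hat{x}_{Q_{\textnormal{M},R},i_l}(j)-y_{i_l}(j)\bigr)\mathbbm{1}[j\in S],
\]
i.e.\ an unbiased sparsification of the correction $\hat{x}_{Q_{\textnormal{M},R},i_l}-y_{i_l}$ added back to $y_{i_l}$. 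I would work throughout with this rescaled form (the standard convention for masked/sparsified estimators) and use $\Pr[j\in S]=\mu$ coordinate by coordinate.

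\textbf{Bias.} Taking the coordinate-wise expectation above, independence of $S$ and $\mathbb{E}\mathbbm{1}[j\in S]=\mu$ immediately give $\mathbb{E}[\hat{x}_{Q_\textnormal{Pro},i_l}]=\mathbb{E}[\hat{x}_{Q_{\textnormal{M},R},i_l}]$. Hence $\mathbb{E}[\hat{x}_{Q_\textnormal{Pro},i_l}-x_{i_l}]=\mathbb{E}[\hat{x}_{Q_{\textnormal{M},R},i_l}-x_{i_l}]$, and taking the supremum over $(\boldsymbol{x},\boldsymbol{y})$ satisfying \eqref{eqDist} in the definition of $\beta$ yields $\beta_{i_l}(Q_\textnormal{Pro})=\beta_{i_l}(Q_{\textnormal{M},R})$ directly.

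\textbf{MSE.} Conditioning on the quantizer randomness first, an $\mathbb{E}_S$ calculation using $\mathbb{E}_S\mathbbm{1}[j\in S]=\mathbb{E}_S\mathbbm{1}[j\in S]^2=\mu$ and expanding the square coordinate-wise gives
\[
\mathbb{E}_S\|\hat{x}_{Q_\textnormal{Pro},i_l}-x_{i_l}\|_2^2 \;=\; \|\hat{x}_{Q_{\textnormal{M},R},i_l}-x_{i_l}\|_2^2+\tfrac{1-\mu}{\mu}\|\hat{x}_{Q_{\textnormal{M},R},i_l}-y_{i_l}\|_2^2.
\]
After taking the expectation over the remaining randomness and the supremum over inputs, the first term becomes $\alpha_{i_l}(Q_{\textnormal{M},R})$. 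For the second I would invoke the geometric guarantee from Section~\ref{cq} that $\hat{x}_{Q_{\textnormal{M},R},i_l}$ is the lattice point in the decoded coset closest to $y_{i_l}$, which, combined with the sub-Gaussian tail \eqref{subga} after unrotating, controls $\|\hat{x}_{Q_{\textnormal{M},R},i_l}-y_{i_l}\|_2^2$ by $\Delta_{i_l}^2$. Finally I would use the elementary inequality $\max\{1,(1-\mu)/\mu\}\leq 1/\mu$ on $(0,1]$ to collapse the whole expression into the stated $\alpha_{i_l}(Q_{\textnormal{M},R})/\mu+\Delta_{i_l}^2/\mu$.

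\textbf{Main obstacle.} The delicate point is controlling the second MSE term without acquiring a stray factor of two: the naive triangle bound $\|\hat{x}_{Q_{\textnormal{M},R},i_l}-y_{i_l}\|_2^2\leq 2\alpha_{i_l}(Q_{\textnormal{M},R})+2\Delta_{i_l}^2$ would cost exactly that, whereas the closest-lattice-point property baked into the coset quantizer is what gives the clean $\Delta_{i_l}^2/\mu$ constant demanded by the lemma. This is the same mechanism that makes the $Q_{\textnormal{WZ}}$ analysis in \cite{20} tight, so I expect the argument to transfer here with only cosmetic changes; the one point to check carefully is that the side information $h$ fed into $Q_{\textnormal{M},R}$—which in our chain construction is an earlier estimate $\hat{x}_{Q_\textnormal{Pro},i_{t-1}}$, not literally $y_{i_l}$—is still within the distance parameter $\Delta'$ used to set $\epsilon$, so that the modulo decoding succeeds.
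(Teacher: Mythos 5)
Your $\mathbb{E}_S$-decomposition of the sparsified MSE is correct, and in fact slightly more careful than the paper's own manipulation. Writing $a=\hat{x}_{Q_{\textnormal{M},R},i_l}-x_{i_l}$ and $b=\hat{x}_{Q_{\textnormal{M},R},i_l}-y_{i_l}$, your coordinate-wise calculation
$\mathbb{E}_S\|\hat{x}_{Q_\textnormal{Pro},i_l}-x_{i_l}\|_2^2=\|a\|_2^2+\tfrac{1-\mu}{\mu}\|b\|_2^2$
is an exact identity, whereas the paper instead arrives at $\tfrac{1}{\mu}\|a\|_2^2+\tfrac{1-\mu}{\mu}\|x_{i_l}-y_{i_l}\|_2^2$ by silently dropping the cross term $\tfrac{2(1-\mu)}{\mu}\langle a,\,x_{i_l}-y_{i_l}\rangle$, justified by the (conditional) unbiasedness of $Q_{\textnormal{M},R}$. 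So the two routes differ in which quantity must then be controlled: you must bound $\mathbb{E}\|b\|_2^2$, while the paper reduces directly to $\|x_{i_l}-y_{i_l}\|_2^2\le\Delta_{i_l}^2$, which is a hypothesis of the model, not something that needs proving.

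The gap is in how you close your version. The closest-lattice-point property does not give $\mathbb{E}\|\hat{x}_{Q_{\textnormal{M},R},i_l}-y_{i_l}\|_2^2\le\Delta_{i_l}^2$. Per coordinate the decoder returns the point of $\mathbb{Z}_{m,\epsilon}$, a grid of spacing $k\epsilon$, nearest to $Rh(j)$, so all the geometry yields is $|R\hat{x}_{Q_{\textnormal{M},R},i_l}(j)-Rh(j)|\le k\epsilon/2$; with $\epsilon=2\Delta'/(k-2)$ and $\Delta'\asymp\Delta_{i_l}\sqrt{(\ln n)/d}$, summing over the $d$ coordinates costs a multiplicative $\Theta(\ln n)$, not the clean $\Delta_{i_l}^2$ your argument requires. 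Moreover, for chains of length greater than one the anchor $h$ in $Q^{\textnormal{d}}_{\textnormal{M},R}$ is $\hat{x}_{Q_\textnormal{Pro},i_{t-1}}$, not $y_{i_l}$, so ``closest to $y_{i_l}$'' is simply false there, as your last sentence already suspects.

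The correct closure of your route is not lattice geometry but precisely the unbiasedness the paper relies on: expand $b=a+(x_{i_l}-y_{i_l})$, so $\mathbb{E}\|b\|_2^2=\mathbb{E}\|a\|_2^2+2\,\mathbb{E}\langle a,\,x_{i_l}-y_{i_l}\rangle+\|x_{i_l}-y_{i_l}\|_2^2$. When $\mathbb{E}[a\,|\,R]=0$ the middle term vanishes, giving $\mathbb{E}\|b\|_2^2\le\alpha_{i_l}(Q_{\textnormal{M},R})+\Delta_{i_l}^2$ and therefore
$\alpha_{i_l}(Q_\textnormal{Pro})\le\alpha_{i_l}(Q_{\textnormal{M},R})+\tfrac{1-\mu}{\mu}\big(\alpha_{i_l}(Q_{\textnormal{M},R})+\Delta_{i_l}^2\big)=\tfrac{1}{\mu}\alpha_{i_l}(Q_{\textnormal{M},R})+\tfrac{1-\mu}{\mu}\Delta_{i_l}^2\le\tfrac{1}{\mu}\alpha_{i_l}(Q_{\textnormal{M},R})+\tfrac{1}{\mu}\Delta_{i_l}^2$,
matching the lemma. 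Your bias argument is correct and identical to the paper's.
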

\begin{proof}
	\begin{equation*}
	\begin{aligned}
	\mathbb{E}&[||\hat{x}_{Q_{\textnormal{Pro}},{i_l}}- x_{i_l}||_2^2]\\
	&
	=\sum_{j\in [d]}\mathbb{E}\Big[(\frac{1}{\mu}(R\hat{x}_{Q_{\textnormal{M},R},i_l}(j)-Ry_{i_l}(j))\mathbbm{1}_{\{j\in S\}}\\ &\quad-(Rx_{i_l}(j)-Ry_{i_l}(j)))^2\Big]\\
	&
	=\sum_{j\in [d]}\mathbb{E}[(\frac{1}{\mu}(R\hat{x}_{Q_{\textnormal{M},R},i_l}\!-\!Rx_{i_l}(j)))^2\mathbbm{1}_{\{j\in S\}}]\\&
	\quad\!+\!\sum_{j\in [d]}\mathbb{E}\Big[(\frac{1}{\mu}(Rx_{i_l}(j)\!-\!Ry_{i_l}(j))\mathbbm{1}_{\{j\in S\}}\nonumber\\
	&\quad\quad\quad\quad\quad\quad \!-\!(Rx_{i_l}(j)\!-\!Ry_{i_l}(j)))^2\Big]
	\\
	&
	=\frac{1}{\mu}\sum_{j\in [d]}\mathbb{E}[(R\hat{x}_{Q_{\textnormal{M},R},i_l}-Rx_{i_l}(j))^2]\\&\quad+
	\sum_{j\in [d]}\mathbb{E}[(Rx_{i_l}(j)-Ry_{i_l}(j))^2]\cdot\mathbb{E}[(\frac{1}{\mu}\mathbbm{1}_{\{j\in S\}}-1)^2]
	\\
	&
	=\frac{1}{\mu}\sum_{j\in [d]}\mathbb{E}[(R\hat{x}_{Q_{\textnormal{M},R},i_l}-Rx_{i_l}(j))^2]\\&\quad+
	\sum_{j\in [d]}\mathbb{E}[(Rx_{i_l}(j)-Ry_{i_l}(j))^2]\cdot\frac{1-\mu}{\mu}
	\\
	&\leq\frac{\alpha_{i_l}(Q_{\textnormal{M},R})}{\mu}+\frac{\Delta_{i_l}^2}{\mu},
	\end{aligned}
	\end{equation*}
	where we use the independence of $S$ and $R$ in the third identity and use the fact that $R$ is unitary in the final step.
	
	Since 
	\begin{equation*}
	\begin{aligned}
	||&\mathbb{E}[\hat{x}_{Q_{\textnormal{Pro}},i_l}]- x_{i_l})||_2^2\\
	&
	=||\sum_{j\in [d]}\mathbb{E}[(\frac{1}{\mu}(R\hat{x}_{Q_{\textnormal{M},R},i_l}(j)-Ry_{i_l}(j)))\mathbbm{1}_{\{j\in S\}}\\
	&\quad-(Rx_{i_l}(j)-Ry_{i_l}(j))]\boldsymbol{e}_j||_2^2
	\\
	&=||\mathbb{E}[\hat{x}_{Q_{\textnormal{M},R}, i_l}]- x_{i_l})||_2^2,
	\end{aligned}
	\end{equation*}
	where we use the independence of $S$ and $Q_{\textnormal{M},R}$ in the last identity, we have
	\begin{equation*}
	\beta_{i_l}(Q_{\textnormal{Pro}})=\beta_{i_l}(Q_{\textnormal{M},R}).
	\end{equation*}
\end{proof}
The following lemma is given in \cite{20}, which shows that $Q_\textnormal{M}$ is unbiased under certain conditions, and the error will not exceed $\epsilon$.
\begin{lemma}{(see \cite{20})}\label{cqpe}
	Consider $Q_\textnormal{M}$ described in \ref{cq} with parameter $\epsilon$ set to satisfy
	\begin{equation}\label{condk}
	k\epsilon\geq2(\epsilon+\Delta').
	\end{equation}
	Then, for every $x,h\in\mathbb{R}$ such that $|x-h|\leq\Delta'$, the output $Q_\textnormal{M}(x)$ satisfies
	\begin{equation*}
	\begin{array}{c}
	\mathbb{E}[Q_\textnormal{M}(x)]=x\\
	|x-Q_\textnormal{M}(x)|<\epsilon.
	\end{array}
	\end{equation*}
\end{lemma}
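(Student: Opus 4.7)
The plan is to decouple the encoder's internal randomness from the decoder's lattice arithmetic. Let $Y$ denote the random real number the encoder conceptually rounds to before reducing modulo $k$, namely $Y=\lceil x/\epsilon\rceil\epsilon$ with probability $x/\epsilon-\lfloor x/\epsilon\rfloor$ and $Y=\lfloor x/\epsilon\rfloor\epsilon$ with the complementary probability. This is precisely classical stochastic rounding of $x$ onto the lattice $\epsilon\mathbb{Z}$, so an elementary computation over the two cases (and the trivial case $x\in\epsilon\mathbb{Z}$, where both options coincide with $x$) yields $\mathbb{E}[Y]=x$ together with the strict bound $|Y-x|<\epsilon$. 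Moreover $Y/\epsilon\in\{\lfloor x/\epsilon\rfloor,\lceil x/\epsilon\rceil\}$ and by construction $m\equiv Y/\epsilon\pmod k$, so $Y\in\mathbb{Z}_{m,\epsilon}=\{(zk+m)\epsilon:z\in\mathbb{Z}\}$. The lemma will follow as soon as we show the decoder returns exactly this $Y$.

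The main step is therefore to establish that $\arg\min_{w\in\mathbb{Z}_{m,\epsilon}}|w-h|=Y$. I will bound $|Y-h|$ above and the distance of every competitor to $h$ below. By the triangle inequality and the hypothesis $|x-h|\leq\Delta'$, we have $|Y-h|\leq|Y-x|+|x-h|<\epsilon+\Delta'$. For any other point $Y'=Y+jk\epsilon\in\mathbb{Z}_{m,\epsilon}$ with $j\in\mathbb{Z}\setminus\{0\}$, the reverse triangle inequality together with $k\epsilon\geq 2(\epsilon+\Delta')$ yields $|Y'-h|\geq|j|k\epsilon-|Y-h|>k\epsilon-(\epsilon+\Delta')\geq\epsilon+\Delta'>|Y-h|$. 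Hence $Y$ is the \emph{unique} nearest neighbor of $h$ in $\mathbb{Z}_{m,\epsilon}$, so $Q_\textnormal{M}(x)=Y$ with probability one.

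Combining the two observations immediately delivers both conclusions: $\mathbb{E}[Q_\textnormal{M}(x)]=\mathbb{E}[Y]=x$ and $|x-Q_\textnormal{M}(x)|=|x-Y|<\epsilon$. The only delicate point, and the one place where the argument is not entirely mechanical, is carrying the \emph{strict} inequality $|Y-x|<\epsilon$ through to $|Y-h|<\epsilon+\Delta'$. If one merely had $|Y-x|\leq\epsilon$, then on the boundary $k\epsilon=2(\epsilon+\Delta')$ two coset representatives could tie and the decoder could pick the wrong shift. Strictness saves the argument because $Y$ is always a strictly adjacent lattice point to $x$ (or exactly $x$ when $x\in\epsilon\mathbb{Z}$), which feeds strictness into the triangle-inequality bound and thereby into the separation from every competitor $Y'$.
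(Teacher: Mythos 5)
Your proof is correct, and it is essentially the argument from \cite{20}: the paper itself does not reproduce a proof of this lemma but simply cites that reference, and the standard derivation there is exactly your two-step decomposition (unbiasedness and $<\epsilon$ error of stochastic rounding onto $\epsilon\mathbb{Z}$, followed by the coset-separation argument showing the decoder recovers the rounded point because $k\epsilon\geq 2(\epsilon+\Delta')$ forces every other element of $\mathbb{Z}_{m,\epsilon}$ to be farther from $h$). Your remark on why strictness of $|Y-x|<\epsilon$ is needed to rule out ties at the boundary $k\epsilon=2(\epsilon+\Delta')$ is a correct and worthwhile observation.
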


Recall from Section \ref{chain} that for a chain $y_{i_1}\rightarrow x_{i_1}\rightarrow x_{i_2}\cdots\rightarrow x_{i_l}$, the server estimates $ x_{i_l}$ by using $\hat{x}_{Q^{\mathcal{L}_i}_{\textnormal{Pro},i},i_{l-1}}$ as $h$ and $\Delta_{i_1}'+\sum_{s=1}^{l-1}\Delta_{i_si_{(s+1)}}'$ as the parameter $\Delta'$ in $Q_{\textnormal{M},R}$. By Lemma \ref{lossyq}, we consider the quantizer $Q_{\textnormal{M},R}$.
\begin{lemma}\label{chainde}
	If $R$ given in (\ref{ger}) satisfies that for $j\in[d]$ and $t\in[l]$, $|R x_{i_1}(j)-Ry_{i_1}(j)|\leq\Delta'_{i_1}$, $|R x_{i_s}(j)-R x_{{i_{s+1}}}(j)|\leq\Delta_{i_si_{(s+1)}}',\ s\in[l-1]$, then, for $k\geq4$, we have 
	\begin{equation*}
	|R x_{i_t}(j)-R\hat{x}_{Q_{\textnormal{M},R},{i_{t-1}}}(j)|\leq\Delta_{i_1}'+\sum_{s=1}^{t-1}\Delta_{i_si_{(s+1)}}',\ t>1,
	\end{equation*}
	and if $t=1$, we denote $\hat{x}_{Q_{\textnormal{M},R},0}$ by $y_{i_1}$. 
\end{lemma}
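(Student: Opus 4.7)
The plan is to proceed by induction on $t$, applying Lemma \ref{cqpe} and the triangle inequality coordinate-by-coordinate. For the base case $t=1$, by convention $\hat{x}_{Q_{\textnormal{M},R},0}:=y_{i_1}$, so the desired inequality collapses to $|Rx_{i_1}(j)-Ry_{i_1}(j)|\leq\Delta'_{i_1}$, which is one of the hypotheses placed on $R$ in the statement of the lemma.

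For the inductive step at $t\geq 2$, suppose the bound has been established at step $t-1$, namely $|Rx_{i_{t-1}}(j)-R\hat{x}_{Q_{\textnormal{M},R},i_{t-2}}(j)|\leq w_{i_{t-1}}:=\Delta'_{i_1}+\sum_{s=1}^{t-2}\Delta'_{i_s i_{s+1}}$. This is precisely the ``$|x-h|\leq \Delta'$'' precondition needed to invoke Lemma \ref{cqpe} on the scalar modulo quantizer $Q_{\textnormal{M}}$ at the $(t-1)$-th step with parameter $\Delta'=w_{i_{t-1}}$; its side condition $k\epsilon\geq 2(\epsilon+w_{i_{t-1}})$ reduces to $\epsilon\geq 2w_{i_{t-1}}/(k-2)$, which is satisfied because the globally chosen $\epsilon=2w_{i_l}/(k-2)$ dominates it (the weights $w_{i_t}$ are monotone non-decreasing in $t$). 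Lemma \ref{cqpe} therefore delivers $|Rx_{i_{t-1}}(j)-R\hat{x}_{Q_{\textnormal{M},R},i_{t-1}}(j)|<\epsilon$. Combining this with the chain hypothesis $|Rx_{i_{t-1}}(j)-Rx_{i_t}(j)|\leq\Delta'_{i_{t-1}i_t}$ via the triangle inequality gives $|Rx_{i_t}(j)-R\hat{x}_{Q_{\textnormal{M},R},i_{t-1}}(j)|\leq\Delta'_{i_{t-1}i_t}+\epsilon$, which is what we want to bound by $w_{i_t}=w_{i_{t-1}}+\Delta'_{i_{t-1}i_t}$.

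The remaining algebraic step is therefore to verify $\epsilon\leq w_{i_{t-1}}$ for every $t\geq 2$, and this is where I expect the main obstacle to lie, since the globally chosen $\epsilon=2w_{i_l}/(k-2)$ must be dominated by the strictly smaller partial weight $w_{i_{t-1}}$. Closing the comparison will require exploiting $k\geq 4$ together with the explicit form of the $\Delta'$-parameters prescribed in Lemma \ref{main1}; if the loose generic bound $<\epsilon$ from Lemma \ref{cqpe} turns out to be too weak, the proof can instead be tightened by replacing it with a sharper per-step quantization error bound that uses the effective modulus spacing active at step $t-1$ rather than the global $\epsilon$, and propagating this refined bound through the induction.
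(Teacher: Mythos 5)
Your approach is the same as the paper's in outline: induction on $t$, applying Lemma~\ref{cqpe} plus the triangle inequality coordinate by coordinate. And you have correctly diagnosed where a naive reading of the parameters runs aground. If every decode step in the chain used the single lattice parameter $\epsilon = 2w_{i_l}/(k-2)$ announced in Lemma~\ref{main1}, then after invoking Lemma~\ref{cqpe} and the triangle inequality you would be left needing $\epsilon \le w_{i_{t-1}}$ for every $t\ge 2$, which fails for small $t$ and long chains (take $t=2$: you would need $2w_{i_l}/(k-2)\le \Delta'_{i_1}$, impossible once $w_{i_l}$ is large relative to $\Delta'_{i_1}$). So the proposal as written does not close.

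The fix you tentatively gesture at at the end is exactly the paper's move, and it is not an optional tightening but the only way the argument goes through. The $\epsilon$ in Lemma~\ref{main1} is the lattice parameter of the \emph{terminal} client $i=i_l$ only; each intermediate client $i_t$ in the chain encodes with its own lattice parameter $\epsilon_{i_t} = 2w_{i_t}/(k-2)$, which is the minimal choice satisfying $k\epsilon_{i_t}\ge 2(\epsilon_{i_t}+w_{i_t})$. With this per-client choice, Lemma~\ref{cqpe} applied at step $t-1$ gives $|Rx_{i_{t-1}}(j)-R\hat{x}_{Q_{\textnormal{M},R},i_{t-1}}(j)| \le 2w_{i_{t-1}}/(k-2)$, and the inductive step then requires only $2w_{i_{t-1}}/(k-2)\le w_{i_{t-1}}$, i.e. $k\ge 4$, which is precisely the hypothesis. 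So you found the right idea and correctly identified the obstruction, but the proposal is incomplete: the per-step lattice parameter is not a fallback refinement, it is the actual definition of the quantizer being analyzed, and the closing inequality it produces is $2/(k-2)\le 1$, not a calculation that ``exploits the explicit form of the $\Delta'$-parameters'' as you speculate.
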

\begin{proof}
	We use induction for $l$. If $l=2$, since $|R x_{i_1}(j)-Ry_{i_1}(j)|\leq\Delta'_{i_1}$, by Lemma \ref{cqpe}, we have 
	\begin{equation*}
	|R x_{i_1}(j)-R\hat{x}_{Q_{\textnormal{M},R},{i_1}}(j)|\leq\frac{2\Delta'_{i_1}}{k-2},
	\end{equation*}
	where we set $\epsilon=\frac{2\Delta'_{i_1}}{k-2}$. Then, for $k\geq4$
	\begin{equation*}
	\begin{aligned}
	|R& x_{i_2}(j)-R\hat{x}_{Q_{\textnormal{M},R},i_1}(j)|\\
	&\leq|R x_{i_2}(j)-R x_{i_1}(j)+R x_{i_1}(j)-R\hat{x}_{Q_{\textnormal{M},R},i_1}(j)|\\
	&\leq|R x_{i_2}(j)-R x_{i_1}(j)|+|R x_{i_1}(j)-R\hat{x}_{Q_{\textnormal{M},R},i_1}(j)|\\
	&\leq\Delta_{i_1i_2}'+\frac{2\Delta'_{i_1}}{k-2}\\
	&\leq\Delta_{i_1i_2}'+\Delta'_{i_1}.
	\end{aligned}
	\end{equation*}
	Suppose
	\begin{equation*}
	|R x_{i_{t-1}}(j)-R\hat{x}_{Q_{\textnormal{M},R},i_{t-2}}(j)|\leq\Delta_{i_1}'+\sum_{s=1}^{t-2}\Delta_{i_si_{(s+1)}}'.
	\end{equation*}
	By Lemma \ref{cqpe}, we have
	\begin{equation*}
	|R x_{i_{t-1}}(j)-R\hat{x}_{Q_{\textnormal{M},R},i_{t-1}}(j)|\leq\frac{2(\Delta'_{i_1}+\sum_{s=1}^{t-2}\Delta'_{i_si_{(s+1)}})}{k-2}.
	\end{equation*}
	So, for $t$ and $k\geq4$, we have 
	\begin{equation*}
	\begin{aligned}
	|R& x_{i_t}(j)-R\hat{x}_{Q_{\textnormal{M},R},i_{t-1}}(j)|\\
	&\leq|R x_{i_t}(j)-R x_{i_{t-1}}(j)+R x_{i_{t-1}}(j)-R\hat{x}_{Q_{\textnormal{M},R},i_{t-1}}(j)|\\
	&\leq|R x_{i_t}(j)-R x_{i_{t-1}}(j)|+|R x_{i_{t-1}}(j)-R\hat{x}_{Q_{\textnormal{M},R},i_{t-1}}(j)|\\
	&\leq\Delta_{i_{(t-1)}i_t}'+\frac{2(\Delta_{i_1}'+\sum_{s=1}^{t-2}\Delta_{i_si_{(s+1)}}')}{k-2}\\
	&\leq\Delta_{1}'+\sum_{s=1}^{t-1}\Delta'_{i_si_{(s+1)}}.
	\end{aligned}
	\end{equation*}
\end{proof}
For convenience, let $\mathcal{A}_1$ denote event $\{R:|R x_{i_1}(j)-Ry_{i_1}(j)|\leq\Delta'_{i_1}\}$ and $\mathcal{A}_{s}$ denote event $\{R:|R x_{i_{s-1}}(j)-R x_{i_s}(j)|\leq\Delta'_{i_{(s-1)}i_s}\}$. So, by Lemma \ref{chainde}, we have that if $R\in \bigcap_{s\in[t]}\mathcal{A}_s$, then \begin{equation*}
|R x_{i_t}(j)-R\hat{x}_{Q_{\textnormal{M},R},i_{t-1}}(j)|\leq\Delta_{i_1}'+\sum_{s=1}^{t-1}\Delta'_{i_si_{(s+1)}}.
\end{equation*}
Thus, \begin{equation}\label{estiA}
\begin{aligned}
|R x_{i_t}(j)-R\hat{x}_{Q_{\textnormal{M},R},i_t}(j)|&\leq\epsilon_{i_1}+\sum_{s=1}^{t-1}\epsilon_{i_si_{(s+1)}}\\
&=\frac{2(\Delta'_{i_1}+\sum_{s=1}^{t-1}\Delta'_{i_si_{(s+1)}})}{k-2}.
\end{aligned}
\end{equation}
where we set $\epsilon_{i_1}=\frac{2\Delta_{i_1}'}{k-2}$ and $\epsilon_{i_si_{(s+1)}}=\frac{2\Delta'_{i_si_{(s+1)}}}{k-2}$.

For the random matrix $R$ given in (\ref{ger}), for every $\boldsymbol{z}\in \mathbb{R}^d$, the random variables $R\boldsymbol{z}(i),\ i\in[d]$, are sub-Gaussian with variance parameter $||\boldsymbol{z}||_2^2/d$. Furthermore, we need the following bound.
\begin{lemma}[see \cite{20}]\label{sG}
	For a sub-Gaussian random $Z$ with variance factor $\sigma^2$ and every $t\geq0$, we have \begin{equation*}
	\mathbb{E}[Z^2\mathbbm{1}_{\{|Z|>t\}}]\leq 2(2\sigma^2+t^2)e^{-t^2/2\sigma^2}.
	\end{equation*}
\end{lemma}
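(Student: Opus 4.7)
The plan is to bound the truncated second moment $\mathbb{E}[Z^2\mathbbm{1}_{\{|Z|>t\}}]$ by combining the defining sub-Gaussian tail estimate $P(|Z|>u)\leq 2e^{-u^2/(2\sigma^2)}$ with the layer-cake (tail-integration) representation of the expectation of a nonnegative random variable. The strategy avoids any density assumption or integration-by-parts against an indicator, and reduces the inequality to an elementary one-dimensional integral.

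First, I would write $\mathbb{E}[Z^2\mathbbm{1}_{\{|Z|>t\}}]=\int_0^\infty P\!\left(Z^2\mathbbm{1}_{\{|Z|>t\}}>u\right)du$. The key observation is that the event on the right equals $\{|Z|>\max(t,\sqrt{u})\}$, so the integrand is $P(|Z|>t)$ on $[0,t^2]$ and $P(|Z|>\sqrt{u})$ on $(t^2,\infty)$. This splits the expectation cleanly into a ``plateau'' piece of mass $t^2\,P(|Z|>t)$ and a ``tail'' piece $\int_{t^2}^\infty P(|Z|>\sqrt{u})\,du$.

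Next I would plug in the sub-Gaussian tail bound. The plateau piece is immediately at most $2t^2 e^{-t^2/(2\sigma^2)}$. For the tail piece, the substitution $v=\sqrt{u}$ (so $du=2v\,dv$) converts it to $\int_t^\infty 4v\, e^{-v^2/(2\sigma^2)}\,dv$; then $w=v^2/(2\sigma^2)$ reduces the integrand to $4\sigma^2 e^{-w}$ and the integral evaluates to $4\sigma^2 e^{-t^2/(2\sigma^2)}$. Adding the two contributions yields $2(t^2+2\sigma^2)e^{-t^2/(2\sigma^2)}$, exactly the claimed bound.

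The only subtlety I foresee is applying the layer-cake identity to the nonnegative variable $Y:=Z^2\mathbbm{1}_{\{|Z|>t\}}$ rather than to $Z^2$ itself, since it is this truncation that forces the correct threshold $\max(t,\sqrt{u})$ in the level sets and produces both the plateau and the tail contributions in the right proportion. Once that event-equivalence is written down, the remainder is a routine computation; no property of $Z$ beyond the sub-Gaussian tail, and in particular no symmetry of its distribution, is invoked.
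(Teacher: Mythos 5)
Your proof is correct. Note that the paper itself gives no proof of this lemma — it is stated with the attribution ``see \cite{20}'' and the argument is deferred entirely to that reference — so there is no in-paper proof to compare against. Your layer-cake argument is the standard one: applying the tail-integration identity to the truncated nonnegative variable $Y=Z^2\mathbbm{1}_{\{|Z|>t\}}$, observing that for $u\ge 0$ the level set is $\{Y>u\}=\{|Z|>\max(t,\sqrt{u})\}$, splitting the integral at $u=t^2$ into a plateau piece $t^2\,P(|Z|>t)$ and a tail piece $\int_{t^2}^\infty P(|Z|>\sqrt{u})\,du$, and then inserting the sub-Gaussian tail bound $P(|Z|>v)\le 2e^{-v^2/(2\sigma^2)}$. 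The substitutions $v=\sqrt u$ and $w=v^2/(2\sigma^2)$ turn the tail piece into $4\sigma^2 e^{-t^2/(2\sigma^2)}$, which together with the plateau bound $2t^2 e^{-t^2/(2\sigma^2)}$ gives exactly $2(t^2+2\sigma^2)e^{-t^2/(2\sigma^2)}$. This is equivalent to the more commonly stated identity
\begin{equation*}
\mathbb{E}\big[Z^2\mathbbm{1}_{\{|Z|>t\}}\big]=t^2\,P(|Z|>t)+\int_t^{\infty}2v\,P(|Z|>v)\,dv,
\end{equation*}
which is just your decomposition after the change of variable $v=\sqrt u$. Your remark that nothing beyond the sub-Gaussian tail is used — no symmetry, no density — is the right level of care, and it is worth keeping because the event-equivalence $\{Y>u\}=\{|Z|>\max(t,\sqrt u)\}$ is precisely the step that makes the plateau term appear with the correct factor $t^2$; applying the layer-cake identity to $Z^2$ directly and then trying to truncate afterwards would not produce the two terms in the right proportion.
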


We now handle the $\alpha_{i_t}(Q_{\textnormal{M},R})$ and $\beta_{i_t}(Q_{\textnormal{M},R})$ separately below. 

Firstly, we consider $t>1$. Since $R$ is a unitary transform, we have 
\begin{equation}\label{eq1}
\begin{aligned}
\mathbb{E}&[||\hat{x}_{Q_{\textnormal{M},R},i_t}- x_{i_t}||_2^2]\\
&=\mathbb{E}[||R\hat{x}_{Q_{\textnormal{M},R},i_t}-R x_{i_t}||_2^2]\\
&=\sum_{j=1}^{d}\mathbb{E}[(R\hat{x}_{Q_{\textnormal{M},R},i_t}(j)-R x_{i_t}(j))^2]\\
&=
\sum_{j=1}^{d}\mathbb{E}[(R\hat{x}_{Q_{\textnormal{M},R},i_t}(j)-R x_{i_t}(j))^2\mathbbm{1}_{\bigcap_{s\in[t]}\mathcal{A}_s}]\\
&\quad+
\sum_{j=1}^{d}\mathbb{E}[(R\hat{x}_{Q_{\textnormal{M},R},i_t}(j)-R x_{i_t}(j))^2\mathbbm{1}_{\bigcup_{s\in[t]}\mathcal{A}_s^c}].
\end{aligned}
\end{equation}

We consider the first term. By (\ref{estiA}), we have 
\begin{equation}\label{eq2}
\begin{aligned}
\sum_{j=1}^{d}&\mathbb{E}[(R\hat{x}_{Q_{\textnormal{M},R},i_t}(j)-R x_{i_t}(j))^2\mathbbm{1}_{\bigcap_{s\in[t]}\mathcal{A}_s}]\\
&\leq d(\epsilon_{i_1}+\sum_{s=1}^{t-1}\epsilon_{i_si_{(s+1)}})^2\\
&\leq dt(\epsilon_{i_1}^2+\sum_{s=1}^{t-1}\epsilon_{i_si_{(s+1)}}^2),
\end{aligned}
\end{equation}
where in the final steps we use the fact that $(a_1+\cdots+a_n)^2\leq n(a_1^2+\cdots+a_n^2)$.

For the second term on (\ref{eq1}), we get 
\begin{equation}\label{eq3}
\begin{aligned}
\sum_{j=1}^{d}&\mathbb{E}[(R\hat{x}_{Q_{\textnormal{M},R},i_t}(j)-R x_{i_t}(j))^2\mathbbm{1}_{\bigcup_{s\in[t]}\mathcal{A}_s^c}]\\
&
\leq 3\sum_{j=1}^{d}\Big[\mathbb{E}[(R\hat{x}_{Q_{\textnormal{M},R},i_t}(j)-R\hat{x}_{Q_{\textnormal{M},R},i_{t-1}}(j))^2\mathbbm{1}_{\bigcup_{s\in[t]}\mathcal{A}_s^c}]\\
&\quad+
\mathbb{E}[(R\hat{x}_{Q_{\textnormal{M},R},i_{t-1}}(j)-R x_{i_{t-1}}(j))^2\mathbbm{1}_{\bigcup_{s\in[t]}\mathcal{A}_s^c}]\\
&\quad+
\mathbb{E}[(R x_{i_{t-1}}(j)-R x_{i_t}(j))^2\mathbbm{1}_{\bigcup_{s\in[t]}\mathcal{A}_s^c}]\Big]
\\
&
\leq 3k^2(\epsilon_{i_1}+\sum_{s=1}^{t-1}\epsilon_{i_si_{(s+1)}})^2\sum_{j=1}^{d}P(\bigcup_{s\in[t]}\mathcal{A}_s^c)\\
&\quad+
3\sum_{j=1}^{d}\Big[\mathbb{E}[(R\hat{x}_{Q_{\textnormal{M},R},i_{t-1}}(j)-R x_{i_{t-1}}(j))^2\mathbbm{1}_{\bigcup_{s\in[t]}\mathcal{A}_s^c}]\\
&\quad+
\mathbb{E}[(R x_{i_{t-1}}(j)-R x_{i_t}(j))^2\mathbbm{1}_{\bigcup_{s\in[t]}\mathcal{A}_s^c}]\Big]
\\
&
\leq 3k^2(\epsilon_{i_1}+\sum_{s=1}^{t-1}\epsilon_{i_si_{(s+1)}})^2\sum_{j=1}^{d}P(\bigcup_{s\in[t]}\mathcal{A}_s^c)\\
&\quad\!+\!
3\sum_{j=1}^{d}\mathbb{E}[(R x_{i_{t-1}}(j)\!-\!R x_{i_t}(j))^2\mathbbm{1}_{\bigcup_{s\in[t]}\mathcal{A}_s^c}]\!+\!3\alpha_{i_{t-1}}(Q_{\textnormal{M},R})
\\
&
\leq 3k^2(\epsilon_{i_1}+\sum_{s=1}^{t-1}\epsilon_{i_si_{(s+1)}})^2\sum_{j=1}^{d}P(\bigcup_{s\in[t]}\mathcal{A}_s^c)\\
&\quad+
3\sum_{j=1}^{d}\mathbb{E}[(R x_{i_{t-1}}(j)-R x_{i_t}(j))^2\mathbbm{1}_{\bigcup_{s\in[t-1]}\mathcal{A}_s^c}]\\
&\quad+
3\sum_{j=1}^{d}\mathbb{E}[(R x_{i_{t-1}}(j)-R x_{i_t}(j))^2\mathbbm{1}_{\mathcal{A}_l^c}]+
3\alpha_{i_{t-1}}(Q_{\textnormal{M},R})
\\
&
\leq 3k^2(\epsilon_{i_1}+\sum_{s=1}^{t-1}\epsilon_{i_si_{(s+1)}})^2\sum_{j=1}^{d}P(\bigcup_{s\in[t]}\mathcal{A}_s^c)\\
&\quad\!+\!
3\Delta_{i_{(t-1)}i_t}^2\!+\!
3\sum_{j=1}^{d}\mathbb{E}[(R x_{i_{t-1}}(j)\!-\!R x_{i_t}(j))^2\mathbbm{1}_{\mathcal{A}_t^c}]
\\
&\quad\!+\!
3\alpha_{i_{t-1}}(Q_{\textnormal{M},R}).
\end{aligned}
\end{equation}

Since \begin{equation}\label{seq1}
\begin{aligned}
&P(\bigcup_{s\in[t]}\mathcal{A}_s^c)\\
&\leq\sum_{s=1}^{t}P(\mathcal{A}_s^c)\\
&\leq 2e^{-d\Delta_{i_1}^{'2}/2\Delta_{i_1}^2}+2\sum_{s=2}^{t}e^{-d\Delta_{i_{(s-1)}i_s}^{'2}/2\Delta_{i_{(s-1)}i_s}^2}\\
&=2t(\sqrt{n})^{-3},
\end{aligned}
\end{equation}
where in the final step we use $\Delta_{i_1}'=\sqrt{6(\Delta_{i_1}^2/d)\ln \sqrt{n}}$ and $\Delta_{i_{(s-1)}i_s}'=\sqrt{6(\Delta_{i_{(s-1)}i_s}^2/d)\ln \sqrt{n}}$, we get
\begin{equation}\label{seq2}
\begin{aligned}
\sum_{j=1}^{d}&\mathbb{E}[(R\hat{x}_{Q_{\textnormal{M},R},i_t}(j)-R x_{i_t}(j))^2\mathbbm{1}_{\bigcup_{s\in[t]}\mathcal{A}_s^c}]\\
& 
\leq 6tk^2(\epsilon_{i_1}+\sum_{s=1}^{t-1}\epsilon_{i_si_{(s+1)}})^2(\sqrt{n})^{-3}\\
&\quad+
3\Delta_{i_{(t-1)}i_t}^2\!+\!
3\sum_{j=1}^{d}\mathbb{E}[(R x_{i_{t-1}}(j)\!-\!R x_{i_t}(j))^2\mathbbm{1}_{\mathcal{A}_t^c}]\\
&\quad+3\alpha_{i_{t-1}}(Q_{\textnormal{M},R})
\\
&
\overset{(a)}{\leq} 6tk^2(\epsilon_{i_1}+\sum_{s=1}^{t-1}\epsilon_{i_si_{(s+1)}})^2(\sqrt{n})^{-3}\\
&\quad+
3\Delta_{i_{(t-1)}i_t}^2+
12\Delta_{i_{(t-1)}i_t}^2(1+3\ln\sqrt{n})(\sqrt{n})^{-3}\\
&\quad+
3\alpha_{i_{t-1}}(Q_{\textnormal{M},R})
\\
&
\leq 6t^2k^2(\epsilon_{i_1}^2+\sum_{s=1}^{t-1}\epsilon_{i_si_{(s+1)}}^2)(\sqrt{n})^{-3}\\
&\quad+
3\Delta_{i_{(t-1)}i_t}^2+
12\Delta_{i_{(t-1)}i_t}^2(1+3\ln\sqrt{n})(\sqrt{n})^{-3}\\
&\quad+
3\alpha_{i_{t-1}}(Q_{\textnormal{M},R}),
\end{aligned}
\end{equation}
where in (a) we use \emph{Lemma} \ref{sG}.

Combining (\ref{eq2}) and (\ref{seq2}), we have
\begin{equation}\label{a1}
\begin{aligned}
&\mathbb{E}[||\hat{x}_{Q_{\textnormal{M},R},i_t}- x_{i_t}||_2^2]\\
&\leq
dt(\epsilon_{i_1}^2+\sum_{s=1}^{t-1}\epsilon_{i_si_{(s+1)}}^2)+
6t^2k^2(\epsilon_{i_1}^2+\sum_{s=1}^{t-1}\epsilon_{i_si_{(s+1)}}^2)(\sqrt{n})^{-3}\\
&\quad\!+\!
3\Delta_{i_{(t-1)}i_t}^2\!+\!
12\Delta_{i_{(t-1)}i_t}^2(1\!+\!3\ln\sqrt{n})(\sqrt{n})^{\!-\!3}\!+\!
3\alpha_{i_{t-1}}(Q_{\textnormal{M},R})
\\
&\overset{(a)}{=}
\frac{24tA_{i_t}\ln\sqrt{n}}{(k-2)^2}+\frac{144t^2k^2A_{i_t}\ln\sqrt{n}}{(k-2)^2n\sqrt{n}}+
3\Delta_{i_{(t-1)}i_t}^2\\
&\quad+
12\Delta_{i_{(t-1)}i_t}^2(1+3\ln\sqrt{n})(\sqrt{n})^{-3}+
3\alpha_{i_{t-1}}(Q_{\textnormal{M},R})
\\
&\overset{(b)}{\leq}
\frac{24tA_{i_t}\ln\sqrt{n}}{(k-2)^2}+\frac{576t^2A_{i_t}}{en}+
3\Delta_{i_{(t-1)}i_t}^2+
\frac{36\Delta_{i_{(t-1)}i_t}^2}{e^{2/3}n}\\
&\quad+
3\alpha_{i_{t-1}}(Q_{\textnormal{M},R})
\\
&\leq\frac{24lA_{i_t}\ln\sqrt{n}}{(k-2)^2}\!+\!c_{t}(n)\frac{A_{i_t}+\Delta_{i_{(t-1)}i_t}^2}{n}+3\alpha_{i_{t-1}}(Q_{\textnormal{M},R}),
\end{aligned}
\end{equation}
where in (a) we use $\epsilon_1=\frac{2\Delta'_1}{k-2}$ and $\epsilon_{(s-1)s}=\frac{2\Delta'_{(s-1)s}}{k-2}$ and in (b) we use $(1+3\ln\sqrt{n})/\sqrt{n}\leq3/e^{2/3}$ and $(\ln\sqrt{n})/\sqrt{n}\leq1/e$ and $A_{i_t}\!=\!\Delta_{i_1}^2\!+\!\sum_{s=1}^{t-1}\Delta_{i_si_{(s+1)}}^2$.

Now, we consider $\beta_{i_t}(Q_{\textnormal{M},R})$. For $\beta_{i_t}(Q_{\textnormal{M},R})$, we have 
\begin{equation}\label{b1}
\begin{aligned}
||\mathbb{E}&[\hat{x}_{Q_{\textnormal{M},R},i_t}]- x_{i_t}||_2^2\\
&=||\mathbb{E}[R\hat{x}_{Q_{\textnormal{M},R},i_t}]-R x_{i_t}||_2^2\\
&\overset{(a)}{\leq}\sum_{j=1}^{d}\mathbb{E}[(R\hat{x}_{Q_{\textnormal{M},R},i_t}(j)-R x_{i_t}(j))\mathbbm{1}_{\bigcup_{s\in[t]}\mathcal{A}_s^c}]^2\\
&\leq\sum_{j=1}^{d}\mathbb{E}[(R\hat{x}_{Q_{\textnormal{M},R},i_t}(j)-R x_{i_t}(j))^2\mathbbm{1}_{\bigcup_{s\in[t]}\mathcal{A}_s^c}]\\
&\leq c_{t}(n)\frac{A_{i_t}+\Delta_{i_{(t-1)}i_t}^2}{n}+3\alpha_{i_{t-1}}(Q_{\textnormal{M},R}),
\end{aligned},
\end{equation}
where  (a) holds by the fact that if $R\in\bigcap_{s\in[t]}\mathcal{A}_s$, then $\hat{x}_{Q_{\textnormal{M},R},i_t}$ is an unbiased estimte of $ x_{i_t}$ by Lemma \ref{cqpe}.

For $t=1$, following a similar method above we have 
\begin{equation}
\begin{aligned}
\mathbb{E}&[||\hat{x}_{Q_{\textnormal{M},R},i_1}- x_{i_1}||_2^2]\\
&\leq
d\epsilon_{i_1}^2+4dk^2\epsilon_{i_1}^2(\sqrt{n})^{-3}+
4(2\Delta_{i_1}^2+d\Delta_{i_1}^{'2})(\sqrt{n})^{-3}
\\
&\leq
\frac{24\Delta_{i_1}^2\ln\sqrt{n}}{(k-2)^2}+(\frac{96}{e}(\frac{k}{k-2})^2+\frac{24}{e^{2/3}})\frac{\Delta_{i_1}^2}{n}
\\
&\leq
\frac{24\Delta_{i_1}^2\ln\sqrt{n}}{(k-2)^2}+154\frac{\Delta_{i_1}^2}{n}
.
\end{aligned}
\end{equation}
Similarly, for $\beta_{i_1}(Q_{\textnormal{M},R})$ we have
\begin{equation*}
||\mathbb{E}[\hat{x}_{Q_{\textnormal{M},R},i_1}]- x_1{i_1}||_2^2\leq 154\frac{\Delta_{i_1}^2}{n}.
\end{equation*}
We complete the proof of the lemma according Lemma \ref{lossyq}.

\subsection{Proof of Corollary \ref{simple}}\label{c1}
We first consider $\alpha_{i_{l}}(Q_{\textnormal{M},R})$ and $\beta_{i_{l}}(Q_{\textnormal{M},R})$. Then, we prove 
\begin{equation*}
\alpha_{i_l}(Q_{\textnormal{M},R})\leq\frac{24D_{i_l}^i\ln\sqrt{n}}{(k-2)^2}+154\frac{D_{i_l}^i}{n},
\end{equation*}
\begin{equation*}
\beta_{i_l}(Q_{\textnormal{M},R})\leq 154\frac{D_{i_l}^i}{n}.
\end{equation*}

We use induction for $l$. If $l=1$, it is obvious.

Now, we suppose
\begin{equation*}
\alpha_{i_{l-1}}(Q_{\textnormal{M},R})\leq\frac{24D_{i_{l-1}}^i\ln\sqrt{n}}{(k-2)^2}+154\frac{D_{i_{l-1}}^2}{n},
\end{equation*}
\begin{equation*}
\beta_{i_{l-1}}(Q_{\textnormal{M},R})\leq 154\frac{D_{i_{l-1}}^i}{n}.
\end{equation*}
By (\ref{a1}) and (\ref{b1}), we have
\begin{equation*}
\begin{aligned}
&\alpha_{i_{l}}(Q_{\textnormal{M},R})\\
&\leq\frac{24lA_{i_l}\ln\sqrt{n}}{(k-2)^2}\!+\!c_{i_l}(n)\frac{A_{i_l}+\Delta_{i_{(l-1)}i_l}^2}{n}+3\alpha_{i_{l-1}}(Q_{\textnormal{M},R})\\
&\leq\frac{24(lA_{i_l}+3D_{i_{l-1}^i})\ln\sqrt{n}}{(k-2)^2}\!+(\!c_{i_l}(n)\frac{A_{i_l}+\Delta_{i_{(l-1)}i_l}^2}{154}+3D_{i_{l-1}^i})\cdot\frac{154}{n}
\end{aligned}
\end{equation*}
and
\begin{equation*}
\begin{aligned}
&\beta_{i_{l}}(Q_{\textnormal{M},R})\\
&\leq\!c_{i_l}(n)\frac{A_{i_l}+\Delta_{i_{(l-1)}i_l}^2}{n}+3\alpha_{i_{l-1}}(Q_{\textnormal{M},R})\\
&\leq\frac{3D^i_{i_{l-1}}24\ln\sqrt{n}}{(k-2)^2}\!+(\!c_{i_l}(n)\frac{A_{i_l}+\Delta_{i_{(l-1)}i_l}^2}{154}+3D_{i_{l-1}^i})\cdot\frac{154}{n}\\
&\leq(\frac{3nD^i_{i_{l-1}}24\ln\sqrt{n}}{154(k-2)^2}\!+\!c_{i_l}(n)\frac{A_{i_l}+\Delta_{i_{(l-1)}i_l}^2}{154}+3D_{i_{l-1}^i})\cdot\frac{154}{n}\\
&\leq(\frac{3nD^i_{i_{l-1}}}{154}\!+\!c_{i_l}(n)\frac{A_{i_l}+\Delta_{i_{(l-1)}i_l}^2}{154}+3D_{i_{l-1}^i})\cdot\frac{154}{n},
\end{aligned}
\end{equation*}
where in the last step, we use the fact $\frac{24\ln\sqrt{n}}{(k-2)^2}\leq 1$ since $\log k=\lceil \log(2+\sqrt{12\ln n})\rceil$.

Since \begin{IEEEeqnarray}{rCl}
	D_{i_l}^i=&&\max\{l(\Delta_{i_1}^2\!+\!\sum_{s=1}^{l-1}\Delta_{i_si_{(s+1)}}^2), \frac{c_{l}(n)}{154}(\Delta_{i_1}^2\\
	\nonumber && +\!\sum_{s=1}^{l-1}\Delta_{i_si_{(s+1)}}^2+\Delta_{i_{(l\!-\!1)}i_l}^2)+\frac{3nD_{i_{l-1}}^i}{154}\}+3D_{i_{l-1}}^i,
\end{IEEEeqnarray} and and Lemma \ref{lossyq}, we complete the proof.
\end{document}